\documentclass[letterpaper, 10 pt, conference]{ieeeconf}      

\IEEEoverridecommandlockouts                              

\overrideIEEEmargins                                      

\usepackage[T1]{fontenc}   


\newcommand{\markup}[1]{{\color{black}{#1}}}

\usepackage{geometry}
\geometry{left=0.667in, right=0.667in, top=0.792in, bottom=0.597in}

\usepackage{amsthm}      
\usepackage{amsmath,amsfonts}
\usepackage{bbm}
\usepackage{amssymb}  
\usepackage{algorithmic}
\usepackage{algorithm}
\usepackage{array} 
\usepackage{textcomp}
\usepackage{stfloats}
\usepackage{url}
\usepackage{verbatim}
\usepackage{graphicx}
\usepackage{subcaption}
\usepackage{tabularx}
\usepackage{siunitx}
\DeclareMathOperator*{\argmin}{argmin} 

\usepackage{bondgraphs}
\usepackage{tikz}
\usetikzlibrary{calc}
\usepackage{epstopdf}
\usetikzlibrary{shapes,fit}

\newtheorem{theorem}{Theorem}
\newtheorem{example}{Example}

\newtheorem{proposition}{Proposition}
\newtheorem*{objective*}{PBC objective}
\newtheorem*{tankobjective*}{Sampled PBC objective}
\newtheorem{remark}{Remark}

\usepackage{color}





\usepackage[style=ieee,
			url=false,
			isbn=false,
			doi=true,
			giveninits=true,
			uniquename=init,
			isbn=false,
			backend=bibtex,
			minbibnames=1,
			maxbibnames=6	]{biblatex}

\bibliography{root1.bib}

\title{\LARGE \bf
The effect of control barrier functions on energy transfers in controlled physical systems
}

\author{Federico Califano$^{1}$, Riccardo Zanella$^{1}$, Alessandro Macchelli$^{2}$, Stefano Stramigioli$^{1}$
\thanks{$^{1}$Robotics and Mechatronics (RaM) group, University of Twente, The Netherlands. email:
        \{f.califano, r.zanella, 
       s.stramigioli\}@utwente.nl.}
\thanks{$^{2}$Department of Electrical, Electronic and Information Engineering (DEI), University of Bologna, Italy. email: a.macchelli@unibo.it.}
}

\begin{document}

\maketitle
\thispagestyle{empty}
\pagestyle{empty}

\begin{abstract}
Using a port-Hamiltonian formalism, we show the effects of safety-critical control implemented with control barrier functions (CBFs) on the power balance of controlled physical systems. The presented results will provide novel tools to design CBFs inducing desired energetic behaviors of the closed-loop system, including non-trivial damping injection effects and non-passive control actions, effectively injecting energy \markup{into} the system in a controlled manner. Simulations validate the \markup{presented} results.
\end{abstract}

\section{Introduction}

Control objectives in engineering applications are becoming increasingly complex in terms of their semantic and mathematical description. An instructive example to think about is safe collaboration between humans and robots \cite{Robla-Gomez2017WorkingEnvironments,Hamad2023AInteraction,Siciliano2016SpringerRobotics}, where control goals are often represented by multi-objective functions representing the specific task to be executed. These objectives (\markup{often} unknown \markup{or} implicitly designed through e.g. control by human demonstrations) are non-trivially intertwined with safety specifications, aimed at representing critical hazards whose prevention needs to be certified before any \markup{attempt at} real-world application. These safety specifications are often conflicting with the main task to be executed, and the semantic distinction between the ``main task", represented by objective functions, and the ``safety specifications", represented by constraints on the state of the system in the optimization problem, is ubiquitous in the control-theoretic literature \cite{Ames2019ControlApplications,Wabersich2023Data-DrivenSystems,Ferraguti2022SafetyApproach}. This view of safety-critical control applications has led to the concept of \textit{safety filters} \cite{Wabersich2023Data-DrivenSystems}, which are computational units with the goal of transforming a nominal control input (designed for the main task, disregarding the safety requirements) into a new control input which i) minimally modifies the nominal controller and ii) ensures the safety requirements. 
One of the most popular techniques for implementing these safety filters is represented by \textit{control barrier functions} (CBF) \cite{Ames2019ControlApplications,Ames2017ControlSystems}, an algorithm which aims to modify a nominal control input to achieve forward invariance of a safe set.

In this paper, we perform an analysis of the effect of CBFs on energy transfers in physical systems, and present novel control tools aimed at regulating the energy in the closed-loop system.
The idea that an energetic \markup{characterization} captures \markup{high-level} behavioral properties of a system is present in the so-called energy-based or \textit{energy-aware} control literature \cite{Duindam2009ModelingSystems,Stramigioli2015Energy-AwareRobotics,Califano2022OnSystems,Ortega2001PuttingControl}: the ability to control the energy transfers in controlled system induces \textit{desired behaviors}, beyond mere stabilization purposes. \markup{Energy-aware designs serve to achieve high level control objectives based on energy, including and extending passivity-based control (PBC) strategies \cite{vanderSchaftL2}, which are used to robustly achieve stabilization goals.} Energy-aware control is particularly interesting in collaborative robotic applications that take place in unstructured environments, possibly in the presence of humans, since energy transfers between the robot and its surroundings encode information on performance, safety, and energy efficiency \cite{Tadele2014CombiningRobots,Laffranchi2009SafeControl,Zacharaki2020SafetySurvey,Califano2022OnSystems,Tadele2014ThePublications}. \markup{Another instructive example embodying the need of an energy-aware design is periodic locomotion tasks, as purely passive designs cannot sustain them without continuous energy injection to counteract internal dissipation.} The narrative of ``putting energy back in control" \cite{Ortega2001PuttingControl} has been the driving motivation in the development of control schemes in the port-Hamiltonian (pH) formalism \cite{Duindam2009ModelingSystems,Rashad2020TwentyReview}. \markup{PH systems encodes energetic properties of autonomous and controlled physical systems, and will be conveniently used in this paper to gain intuitive and technical advantages in the energetic analysis.}
\markup{While energy-aware methods claim to manage energy flow, practical tools to realize this are still missing. This work provides a novel conceptual and methodological approach to energy-aware control, where CBFs are designed to actively regulate energy flows, including controlled energy injections into the closed-loop system.}

\markup{We recognize other works combining the CBF methodology with energetic analysis of the closed-loop system. In \cite{ROMDLONY201639} the authors combine pH modeling and CBFs to obtain a stable closed-loop system with safety guarantees. In \cite{Capelli2022PassivityEnergy} CBFs are used in combination with \textit{energy tanks} \cite{Califano2022OnSystems} to passivize a potentially non passive control action. In \cite{Notomista2019Passivity-BasedFunctions} a time varying CBF is designed to passivize a closed-loop system. \cite{10787049} develops an optimization that makes the closed-loop system passive and safe at the same time. In \cite{10136379} a class of CBFs preserving the passivity of a closed-loop system were characterized with technical tools similar to those used in this paper. \cite{10886724} implements a similar idea for switching systems.
Other relevant works are \cite{Singletary2021Safety-CriticalSystems}, which introduces so called \textit{energy-based} CBFs to deal with kinematic constraints in robotics, and \cite{9561981}, where a CBF-inspired controller is developed in a similar context and comprising a passivity and safety analysis.
In all these works the desired energetic properties of the closed-loop system were limited to passivity, considered as a necessary condition for some form of safety. In this work we will extend the use of CBFs to energy-aware designs.
}
\markup{In particular our contributions include the introduction of CBFs along}: i) damping injection schemes, in which energy is extracted from the controlled system, \markup{with novel sufficient conditions on closed-loop stability}; and ii) new schemes which effectively inject energy in the controlled system, going beyond passive designs and giving new perspectives to achieve desired closed-loop energetic behaviors. \markup{This last contribution is validated through a task resembling periodic locomotion, introducing a completely new perspective on the use of CBFs}.

The paper is organized as follows. In Sec. \ref{sec:background} the background on pH systems and CBFs is reviewed. In Sec. \ref{sec:3} the main contribution of this letter is given and applied to mechanical systems in Sec. \ref{sec:4}. In Sec. \ref{sec:sims} we show numerical simulations and Sec. \ref{sec:conc} contains conclusions and future work.

\section{Background}
\label{sec:background}
We refer to \cite{vanderSchaftL2,Duindam2009ModelingSystems} regarding pH systems and to \cite{Ames2019ControlApplications,Xu2015RobustnessControl,Ames2017ControlSystems,Singletary2021Safety-CriticalSystems} regarding CBF for references that fully cover the presented background.

\subsection{Port-Hamiltonian systems}
The input--state--output representation of a port--Hamiltonian (pH) system is: 
\begin{equation}
\label{eq:pH}
    \begin{cases}
      \dot{x}= (J(x)-R(x))\partial_x H(x)+g(x)u \\
      y=g(x)^{\top}\partial_x H(x)
    \end{cases} 
\end{equation}
where $x\in \mathcal{D}\subseteq \mathbb R^n$ is the state, $u\in \mathcal{U}\subseteq \mathbb R^m$ is the input, $g(x)$ is the input matrix, $J(x)=-J(x)^{\top}$ and $R(x)=R(x)^{\top}\geq 0$ are respectively skew-symmetric and positive semi-definite symmetric matrices representing the power-preserving and the dissipative components of the system. The non-negative function $H: \mathcal{D}\to \mathbb{R}^{+}$ is called the \textit{Hamiltonian} and maps the state into the total physical energy of the system. $\partial_x H(x) \in \mathbb R^{n}$ denotes the gradient of $H$, represented as a column, and $\partial_x^{\top}H(x)$ denotes its transposed. When not explicitly stated, all described variables are assumed to have a degree of continuity such that the right-hand side of (\ref{eq:pH}) is locally Lipschitz, to guarantee the existence and uniqueness of the solutions.
We now present geometric properties of port-Hamiltonian systems that are relevant in this paper. 

\subsubsection{Power-preserving structure}
The skew-symmetric matrix operator $J(x)$ induces the bracket $\{ \cdot,\cdot \}_J$, a skew-symmetric bilinear map that takes as input two smooth scalar functions on the state space to produce another scalar function. Given two scalar functions $A: \mathcal{D}\to \mathbb{R}$ and $B: \mathcal{D}\to \mathbb{R}$, the bracket is defined as $\{ A,B\}_J := \partial_x^{\top}A(x) J(x) \partial_x B(x)$. When the second slot of the bracket is fed with the Hamiltonian $H$, the bracket completely represents the conservative Hamiltonian dynamics obtained from (\ref{eq:pH}) setting $R(x)=0$ and $u=0$. In this case, for any function $A: \mathcal{D}\to \mathbb{R}$, its variation along the solution of (\ref{eq:pH}) is $\dot{A} = \{A, H \}_J$, and as a particular case the conservation of energy is encoded in the skew-symmetry of the bracket since $\dot{H}=\{H,H\}_J=0$.
\subsubsection{Dissipation structure}
We denote by the bracket $[ \cdot,\cdot ]_Y$ the bilinear map $[ A,B]_Y := \partial_x^{\top}A(x) Y(x) \partial_x B(x)$ for a state-dependent symmetric matrix $Y(x) \in \mathbb{R}^{n \times n}$. For system (\ref{eq:pH}), when $Y(x)=R(x)$ and the bracket is fed twice with the Hamiltonian, this bracket represents dissipated power due to the dissipative effects modeled in $R(x)$, i.e., $[H,H ]_R=\partial_x ^{\top} H(x) R(x) \partial_x H(x)\geq 0$.
\subsubsection{Passivity}
The input $u$ and the output $y$ in (\ref{eq:pH}) are \textit{co-located}, in the sense that the variation of energy due to the input is $\partial_x^{\top} H(x) g(x) u=y^{\top} u$.
It follows that system (\ref{eq:pH}) is \textit{passive} with the Hamiltonian $H(x)$ as storage function and input-output pair $(u,y)$:
\begin{equation}
    \label{eq:passivity}\dot{H}=\partial_x^{\top}H(x) \dot{x}=\underbrace{\{H,H \}_J}_{0} -\underbrace{[H,H ]_R}_{\geq 0} + y^{\top}u\leq y^{\top}u.
\end{equation}
The latter power balance is a statement of energy conservation for the physical system (\ref{eq:pH}), where \markup{all ways in which physical energy can flow along the system are displayed}: the skew-symmetric structure $J$ represents pure routing of energy, the dissipative structure $R$ dissipated energy, and the duality product $y^{\top}u$ represents the instantaneous power injected by the input. Passivity condition (\ref{eq:passivity}) states that the variation of energy in the system is bounded by the injected power.
\subsubsection{Control}\label{subsub:control}From a system-theoretic perspective passivity implies stability under weak conditions: $H(x)$ is a valid Lyapunov candidate since $u=0$ implies $\dot{H}\leq 0$. Since (\ref{eq:pH}) is time-invariant, the autonomous system converges to the largest invariant subset of $\{x \in \mathcal{D} | [H,H]_R=0\}$ which generally depends on the dissipation matrix $R(x)$ and the Hamiltonian $H(x)$.
Many pH-inspired control schemes have been proposed, with the rationale of obtaining a closed-loop system in the form (\ref{eq:pH}) with Hamiltonian and system matrices corresponding to desired behaviors. Importantly, IDA-PBC designs \cite{ORTEGA2002585}, indeed aim at obtaining a closed-loop port-Hamiltonian system \markup{encoding} desired energetic properties.

\subsection{Control barrier functions}

Control barrier functions (CBFs) represent a technique to guarantee forward invariance of a set $\mathcal{C}$, normally called \textit{safe set}. We present the standard background for CBFs directly specialised to port-Hamiltonian systems (\ref{eq:pH}), providing interpretability of CBF-based algorithms from an energetic perspective. 

The goal is to design a state feedback $u=k(x)$ resulting in the closed-loop system $\dot{x}=(J(x)-R(x))\partial_x H+g(x)k(x)$ such that

\begin{equation}
    \forall x(0) \in \mathcal{C} \implies x(t)\in \mathcal{C} \,\,\, \forall t>0.
\end{equation}
The safe set $\mathcal{C}$ is built as the superlevel set of a continuously differentiable function $h:\mathcal{D}\to \mathbb{R}$, i.e., $
    \mathcal{C} = \{ x\in \mathcal{D} : h(x)\geq0 \}.$
The function $h(x)$ is then defined as a CBF on $\mathcal{D}$ if $\partial_x h(x)\neq 0, \forall x \in \partial \mathcal{C} = \{ x\in \mathcal{D} : h(x) = 0 \}$, and 

\begin{equation}
 \sup_{u \in \mathcal{U}} [\dot{h}(x,u)] \geq -\alpha(h(x))   
\end{equation}
for all $x\in \mathcal{D}$ and some \textit{extended class $\mathcal{K}$ function}\footnote{A function $\alpha: (-b,a) \to (- \infty, \infty)$ with $a,b>0$, which is continuous, strictly increasing, and $\alpha(0)=0$.} $\alpha$. Here $\dot{h}=\partial_x^{\top}h(x) \dot{x}$ \markup{is used} to denote the time derivative of $h$ along the solution of (\ref{eq:pH}), which results in:
\begin{equation}
\label{eq:closedFormPsi}
    \dot{h}(x,u)=\{h,H \}_J -[h,H ]_R + \partial_x^{\top} h(x)g(x)u.
\end{equation}
The link between the existence of a CBF and the forward invariance of the related safe set is established by the following key result, present in \cite{Ames2017ControlSystems} and conveniently reported here applied to pH systems (\ref{eq:pH}).

\begin{theorem}
\label{thm:1}
Let $h(x)$ be a CBF on $\mathcal{D}$ for (\ref{eq:pH}). Any locally Lipschitz controller $u=k(x)$ such that $\{h,H \}_J -[h,H ]_R + \partial_x ^{\top}h (x)g(x)k(x) \geq -\alpha (h(x))$ provides forward invariance of $\mathcal{C}$. Additionally $\mathcal{C}$ is asymptotically stable on $\mathcal{D}$.
\end{theorem}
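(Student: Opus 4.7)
The plan is to reduce the statement to a scalar comparison argument on $h(x(t))$, using the hypothesis exactly to produce a differential inequality that is known to preserve non-negativity. Since the closed-loop vector field $(J(x)-R(x))\partial_x H(x)+g(x)k(x)$ is locally Lipschitz (as $k$ is locally Lipschitz and the pH data is assumed smooth enough per the remark after (\ref{eq:pH})), solutions exist and are unique on some maximal interval, so talking about $h(x(t))$ along trajectories is well-defined. Combining (\ref{eq:closedFormPsi}) with the hypothesis on $k$ immediately yields
\begin{equation*}
\dot{h}(x(t)) \;=\; \{h,H\}_J-[h,H]_R+\partial_x^{\top}h(x)g(x)k(x)\;\geq\;-\alpha(h(x(t))).
\end{equation*}

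For forward invariance, I would introduce the scalar comparison ODE $\dot{y}=-\alpha(y)$ with $y(0)=h(x(0))\geq 0$. Because $\alpha$ is an extended class $\mathcal{K}$ function with $\alpha(0)=0$, the constant $y\equiv 0$ is an equilibrium of this ODE, so any solution with $y(0)\geq 0$ stays in $[0,\infty)$. Applying the standard comparison lemma to $\eta(t):=h(x(t))$ and $y(t)$ gives $h(x(t))\geq y(t)\geq 0$ for all $t$ in the interval of existence, so $x(t)\in \mathcal{C}$. A small technical point here is that one needs enough regularity of $\alpha$ for the comparison principle; using the locally Lipschitz (or at least lower-semicontinuous) version of the comparison lemma, or equivalently invoking Nagumo's theorem applied to $\mathcal{C}$, handles this cleanly. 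Forward completeness inside $\mathcal{C}$ can then be argued on the assumption that $\mathcal{C}$ is contained in $\mathcal{D}$ where the dynamics are locally Lipschitz.

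For asymptotic stability of $\mathcal{C}$ on $\mathcal{D}$, take $x(0)\in \mathcal{D}$ with $h(x(0))<0$. The same comparison gives $h(x(t))\geq y(t)$, now with $y(0)<0$. Since $\alpha$ is strictly increasing with $\alpha(0)=0$, we have $-\alpha(y)>0$ on $(-b,0)$, so $y$ is strictly increasing and bounded above by $0$; standard scalar ODE analysis shows $y(t)\to 0^{-}$ as $t\to\infty$. Hence $\liminf_{t\to\infty} h(x(t))\geq 0$, which is the asymptotic-stability conclusion for the set $\mathcal{C}$ (together with the Lyapunov-style stability already furnished by the forward-invariance argument applied to nested sublevel neighborhoods of $\mathcal{C}$).

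The main obstacle I anticipate is not any of the individual steps, all of which are classical, but rather the careful handling of the comparison lemma under the stated regularity of $\alpha$: an extended class $\mathcal{K}$ function need only be continuous and strictly increasing, not Lipschitz, so one must invoke a version of the comparison principle that allows non-unique solutions of the scalar ODE (using the maximal solution from below), or equivalently appeal directly to Nagumo's tangent-cone characterization of forward invariance. Once that technicality is dispatched, the rest of the proof is a direct translation of the Ames--Xu--Grizzle--Tabuada argument into the pH notation via (\ref{eq:closedFormPsi}).
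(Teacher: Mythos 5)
The paper does not prove Theorem~\ref{thm:1}; it imports the result verbatim from the cited reference \cite{Ames2017ControlSystems} (and \cite{Xu2015RobustnessControl}), merely rewriting $\dot h$ in the bracket notation of (\ref{eq:closedFormPsi}). Your argument is precisely the standard Ames--Xu--Grizzle--Tabuada proof --- the differential inequality $\dot h \geq -\alpha(h)$, the scalar comparison with $\dot y = -\alpha(y)$ (minimal/maximal solution or Nagumo to handle the mere continuity of $\alpha$), and the monotone convergence $y(t)\to 0^-$ for the asymptotic stability of $\mathcal{C}$ --- and it is correct, modulo the usual forward-completeness caveat that you already flag and that the original reference glosses over as well.
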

The way controller synthesis induced by CBFs are implemented is to use them as \textit{safety filters}, transforming a nominal state-feedback control input $u_{\textrm{nom}}(x)$ into a new state-feedback control input $u^*(x)$ in a minimally invasive fashion in order to guarantee forward invariance of $\mathcal{C}$. In practice, the following Quadratic Program (QP) is solved:

\begin{equation}
\label{eq:LQ}
\begin{aligned}
u^*(x)= & \argmin_{u\in \mathcal{U}} \quad  ||u-u_{\textrm{nom}}(x) ||^2\\
 & \textrm{s.t.}  \quad  \dot{h}(x,u) \geq -\alpha(h(x)) 
\end{aligned}
\end{equation}

The transformation of the desired control input $u_{\textrm{nom}}(x)$ in $u^*(x)$ by solving (\ref{eq:LQ}) is denoted as \textit{safety-critical control}.
A result that will be crucially used in this work is the following, presented in \cite{Xu2015RobustnessControl} and presented here on pH systems (\ref{eq:pH}) \markup{with the notation used in this paper}.
\markup{
\begin{theorem}   
\label{thm:closed-form}
Let $h(x)$ be a CBF on $\mathcal{D}$ for (\ref{eq:pH}) and assume $\mathcal{U}=\mathbb{R}^{m}$. Define $\Psi(x):=\dot{h}(x,u_{\textup{nom}}(x))+\alpha(h(x))$ and assume $[h,h]_{g g^{\top}}\neq 0$ when $\Psi<0$. The Lipschitz continuous closed-form solution for (\ref{eq:LQ}) is given by $u^*(x)=u_{\textup{nom}}(x)+u_{\textup{safe}}(x)$, where 
\begin{equation}
\label{eq:safetyComponent}
u_{\textup{safe}}(x)= 
 -\mathbb{I}_{\Psi<0}\frac{g^{\top}\partial_x h}{ [h,h]_{g g^{\top}}}\Psi,
\end{equation}
where the indicator function $\mathbb{I}_{\Psi<0}$ returns $1$ if $\Psi < 0$ ($u_{\textrm{safe}}\neq0$) and $0$ otherwise  ($u_{\textrm{safe}}=0$).
\end{theorem}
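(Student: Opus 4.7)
The plan is to solve the QP (\ref{eq:LQ}) in closed form via a change of variables and then verify the regularity of the solution. First, introduce the deviation $\tilde{u}:=u-u_{\textup{nom}}(x)$, so the cost becomes $\|\tilde{u}\|^2$. Using the affine dependence of $\dot{h}$ on $u$ in (\ref{eq:closedFormPsi}), the safety constraint $\dot{h}(x,u)\geq -\alpha(h(x))$ is equivalent to
\begin{equation*}
a(x)^{\top}\tilde{u}\geq -\Psi(x), \qquad a(x):=g(x)^{\top}\partial_x h(x),
\end{equation*}
since $\dot{h}(x,u_{\textup{nom}}+\tilde{u})=\dot{h}(x,u_{\textup{nom}}(x))+a(x)^{\top}\tilde{u}$ and by definition $\dot{h}(x,u_{\textup{nom}}(x))+\alpha(h(x))=\Psi(x)$. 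The QP is thus reduced to the projection of the origin onto a half-space in $\mathbb{R}^m$.

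Next I would perform a case split on the sign of $\Psi(x)$. If $\Psi(x)\geq 0$, the origin $\tilde{u}=0$ is feasible and trivially optimal, giving $u^*(x)=u_{\textup{nom}}(x)$, consistent with $\mathbb{I}_{\Psi<0}=0$ in (\ref{eq:safetyComponent}). If $\Psi(x)<0$, the origin violates the constraint, so the minimizer lies on the hyperplane $a(x)^{\top}\tilde{u}=-\Psi(x)$. By KKT (or directly, because the minimum-norm point on an affine hyperplane is its orthogonal projection), the optimizer is parallel to $a(x)$: $\tilde{u}=\lambda\, a(x)$ with $\lambda\in\mathbb{R}$. Substituting into the active constraint gives $\lambda\, a(x)^{\top}a(x)=-\Psi(x)$, and since $a(x)^{\top}a(x)=\partial_x^{\top}h\, g g^{\top} \partial_x h = [h,h]_{gg^{\top}}$, which by hypothesis is nonzero on $\{\Psi<0\}$, we can solve $\lambda=-\Psi(x)/[h,h]_{gg^{\top}}$. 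Plugging back yields precisely formula (\ref{eq:safetyComponent}).

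The remaining task, and the subtler one, is to establish local Lipschitz continuity of $u^*$. Away from the set $\{\Psi=0\}$ the formula is a composition and quotient of locally Lipschitz maps, and the denominator is bounded away from zero by the CBF hypothesis, so $u^*$ is locally Lipschitz on the open sets $\{\Psi>0\}$ and $\{\Psi<0\}$. The delicate point is behavior at the interface $\{\Psi=0\}$. The key observation is that the indicator--scalar product admits the rewrite $-\mathbb{I}_{\Psi<0}\Psi=\max(0,-\Psi)$, which is a globally Lipschitz function of $\Psi$ vanishing on $\{\Psi\geq 0\}$. Hence
\begin{equation*}
u_{\textup{safe}}(x)=\max(0,-\Psi(x))\,\frac{g(x)^{\top}\partial_x h(x)}{[h,h]_{gg^{\top}}},
\end{equation*}
which is the product of a Lipschitz scalar that is zero on $\{\Psi\geq 0\}$ and a locally Lipschitz vector field on a neighborhood of $\{\Psi<0\}$. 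Continuity across $\{\Psi=0\}$ follows because the first factor vanishes there, and on a small neighborhood of any point of this set, $u_{\textup{safe}}$ can be bounded by a constant times $|\Psi|$, yielding local Lipschitz continuity globally on $\mathcal{D}$. This last argument is the main obstacle, as it requires care in handling the possible blow-up of $1/[h,h]_{gg^{\top}}$ approaching the interface; however since the numerator factor $\max(0,-\Psi)$ is exactly zero on the side where the denominator need not be controlled, the product stays well-behaved.
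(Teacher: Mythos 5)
Your derivation of the closed-form expression itself is correct and is the standard argument: the change of variables $\tilde{u}=u-u_{\textrm{nom}}(x)$, the reduction of (\ref{eq:LQ}) to projecting the origin onto the half-space $\{a^{\top}\tilde{u}\geq-\Psi\}$ with $a=g^{\top}\partial_x h$ (using the affine structure (\ref{eq:closedFormPsi})), and the resolution of the two cases via the active constraint reproduce (\ref{eq:safetyComponent}) exactly, including signs. Be aware, though, that the paper itself does not prove this theorem: it is imported from \cite{Xu2015RobustnessControl} and merely restated in the bracket notation for pH systems, so there is no in-paper proof to compare against and your argument is a from-scratch reconstruction.

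The genuine gap is in the Lipschitz-continuity step at the interface $\{\Psi=0\}$. Since $\|u_{\textrm{safe}}\|=\max(0,-\Psi)/\|a\|$ with $\|a\|^2=[h,h]_{gg^{\top}}$, your claimed bound ``$u_{\textrm{safe}}$ can be bounded by a constant times $|\Psi|$ near the interface'' requires $1/\|a\|$ to be bounded on a neighborhood of the interface intersected with $\{\Psi<0\}$. The stated hypothesis only guarantees $\|a\|\neq 0$ pointwise on the open set $\{\Psi<0\}$; it does not prevent $\|a\|\to 0$ along a sequence in $\{\Psi<0\}$ converging to a point of $\{\Psi=0\}$ (and for a valid CBF the zero set of $a$ is contained in $\{\Psi\geq 0\}$, so such degeneracy can sit exactly on the interface). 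In that situation the ratio $-\Psi/\|a\|^2$ need not vanish and $u_{\textrm{safe}}$ can fail to be Lipschitz, or even continuous: as a scalar caricature, $a(x)=x$ and $\Psi(x)=-|x|$ give $u_{\textrm{safe}}(x)=\mathrm{sign}(x)$ for $x\neq 0$ and $u_{\textrm{safe}}(0)=0$. Your own closing sentence asserts that ``the product stays well-behaved'' precisely where this needs to be proved; the observation that $\max(0,-\Psi)$ vanishes on $\{\Psi\geq 0\}$ controls only the side where the denominator is irrelevant. To close the argument you need a stronger hypothesis --- e.g.\ $[h,h]_{gg^{\top}}\neq 0$ on a neighborhood of the closure of $\{\Psi<0\}$, which is what is effectively in force in \cite{Xu2015RobustnessControl} --- or an explicit rate condition relating the vanishing of $\Psi$ and of $[h,h]_{gg^{\top}}$ at the interface. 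Everything you do on the open sets $\{\Psi>0\}$ and $\{\Psi<0\}$ separately is fine.
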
}
\markup{Consistently with the arguments reported in Sec. \ref{sec:background}.A.4, in the remainder of the paper, (\ref{eq:pH}) will represent a (possibly controlled) physical system, encoding desired closed-loop properties. This perspective, which is custom in energy-based frameworks (e.g., in IDA-PBC designs \cite{ORTEGA2002585}), introduces a new perspective in the CBF framework, i.e., the desired closed-loop behavior is achieved with $u_\textrm{nom}=0$, which will be assumed throughout the paper.
}          
\section{Putting energy back in control through CBFs}
\label{sec:3}

In this section, we discuss how CBF-induced safety-critical control influences the power balance of the controlled physical system. \markup{The following result presents a closed-form solution for the power injected by the safety-critical controller (\ref{eq:safetyComponent})}.

\begin{theorem}
\markup{Under the assumptions of Theorem \ref{thm:closed-form}}, the power balance for the pH system (\ref{eq:pH}), undergoing safety-critical control (\ref{eq:LQ}) induced by a CBF $h(x)$, is

\begin{equation}
\label{eq: dissipationDiscussion}
    \dot{H}=-[H,H]_{R} -\mathbb{I}_{\Psi<0} \frac{ [H,h ]_{g g^{\top}}}{ [h,h ]_{g g^{\top}}} \Psi,
\end{equation}
where $\Psi=\{h,H\}_J -[h,H]_R+\alpha(h(x))$.
\end{theorem}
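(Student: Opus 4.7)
The plan is to combine the passivity identity (\ref{eq:passivity}) with the closed-form expression for the safety-critical input given in Theorem \ref{thm:closed-form}. Because the paper adopts the convention $u_{\textup{nom}}=0$, the applied control reduces to $u=u_{\textup{safe}}(x)$, so the entire energetic footprint of the safety filter is captured by a single duality product $y^{\top}u_{\textup{safe}}$, which I only need to rewrite in bracket form.

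First I would specialise (\ref{eq:passivity}) to the closed loop: the skew-symmetry bracket term $\{H,H\}_J$ vanishes, leaving
$$\dot{H}=-[H,H]_R+y^{\top}u_{\textup{safe}},$$
with $y=g(x)^{\top}\partial_x H(x)$ the pH output co-located with $u$. Next I would substitute the closed form (\ref{eq:safetyComponent}) into this identity to get
$$y^{\top}u_{\textup{safe}}=-\mathbb{I}_{\Psi<0}\,\frac{\partial_x^{\top}H\,gg^{\top}\partial_x h}{[h,h]_{gg^{\top}}}\,\Psi,$$
and then recognise that, by the definition of the symmetric-matrix bracket with $Y=gg^{\top}$, the numerator is exactly $[H,h]_{gg^{\top}}$. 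Feeding this back into the previous display produces (\ref{eq: dissipationDiscussion}) directly.

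The remaining bookkeeping is to check that the $\Psi$ appearing in the statement matches the one inherited from Theorem \ref{thm:closed-form}. There $\Psi=\dot{h}(x,u_{\textup{nom}})+\alpha(h(x))$; specialising the expression (\ref{eq:closedFormPsi}) for $\dot{h}$ to $u_{\textup{nom}}=0$ collapses it to $\{h,H\}_J-[h,H]_R$, so $\Psi=\{h,H\}_J-[h,H]_R+\alpha(h(x))$, as claimed in the statement.

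I do not expect a genuine obstacle here: the proof is essentially a substitution enabled by the co-located pH output structure, and the bracket notation introduced in Sec.~\ref{sec:background} makes the final simplification immediate. The only delicate point is well-posedness of the fraction on the active set $\{\Psi<0\}$, but this is already guaranteed by the standing assumption $[h,h]_{gg^{\top}}\neq 0$ whenever $\Psi<0$ imported from Theorem \ref{thm:closed-form}, so no additional case distinction or regularity argument is needed.
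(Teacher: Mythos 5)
Your proposal is correct and follows exactly the paper's route: the paper's proof is a one-line appeal to ``explicit calculation of the power balance (\ref{eq:passivity}) using $u=u_{\textup{safe}}(x)$ from (\ref{eq:safetyComponent})'', together with specialising (\ref{eq:closedFormPsi}) at $u_{\textrm{nom}}=0$ to identify $\Psi$, which is precisely the substitution and bracket identification you carry out. Your version simply spells out the intermediate step $y^{\top}u_{\textup{safe}}=-\mathbb{I}_{\Psi<0}\,[H,h]_{gg^{\top}}\Psi/[h,h]_{gg^{\top}}$ and the well-posedness of the fraction, both of which the paper leaves implicit.
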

\begin{proof}
 The results follows from an explicit calculation of the power balance (\ref{eq:passivity}) using $u=u_{\textup{safe}}(x)$ given by (\ref{eq:safetyComponent}). \markup{The expression of $\Psi$ follows from (\ref{eq:closedFormPsi}) with $u=u_{\textrm{nom}}=0$.}
\end{proof}
We indicate the instantaneous power injection induced by the safety-critical controller (the last addend in (\ref{eq: dissipationDiscussion})) with
\begin{equation}
\label{eq:safety-criticalPower}
    P_{h,\alpha}=-\mathbb{I}_{\Psi<0} \frac{ [H,h ]_{g g^{\top}}}{ [h,h ]_{g g^{\top}}} \Psi
\end{equation}
where the notation stresses the fact that this term depends both on the CBF $h$ and on the function $\alpha$, a parameter of the CBF algorithm. \markup{We will now explore different control strategies induced by the analytic inspection of this power term.}

\subsubsection{CBF-based damping injection}
\markup{
As reported in Sec. \ref{subsub:control}, when the Hamiltonian $H$ qualifies as a Lyapunov function, the minima of $H$ form a stable set.
The following result gives a useful sufficient condition for preservation of local stability, as well as a constructive method for novel damping injection schemes.

\begin{proposition}
\label{prop:prop}
Suppose $H$ is a Lyapunov function for the pH system (\ref{eq:pH}). Under the assumptions of Theorem \ref{thm:closed-form}, let a safety-critical control algorithm induced by a CBF $h(x)$ act on the system. If $\mathbb{I}_{\Psi<0}[H,h]_{g g^{\top}}\leq0$ in a neighborhood of the equilibrium, the stability of the original system is preserved, and the closed-loop system converges to the largest invariant subset in $\{x \in \mathcal{D} | P_{h,\alpha}- [H,H]_{R}=0 \}$.
\end{proposition}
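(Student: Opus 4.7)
The plan is to use the power balance~(\ref{eq: dissipationDiscussion}) established in Theorem~2, combined with the sign structure imposed by the proposition's hypothesis, to show that $\dot{H}\le 0$ locally so that $H$ continues to serve as a Lyapunov function for the closed-loop system, and then apply LaSalle's invariance principle to pin down the $\omega$-limit set.

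First, I would establish that the safety-critical power~(\ref{eq:safety-criticalPower}) satisfies $P_{h,\alpha}\le 0$ in the stipulated neighborhood. When $\mathbb{I}_{\Psi<0}=0$, the term vanishes trivially. When $\mathbb{I}_{\Psi<0}=1$, by definition $\Psi<0$; the denominator $[h,h]_{g g^{\top}}$ is strictly positive by the non-degeneracy hypothesis of Theorem~\ref{thm:closed-form}; and the numerator $[H,h]_{g g^{\top}}$ is non-positive by the hypothesis of the proposition. Reading off the overall sign in~(\ref{eq:safety-criticalPower}) yields $P_{h,\alpha}\le 0$.

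Next, I would insert this into~(\ref{eq: dissipationDiscussion}) to obtain $\dot{H}=-[H,H]_R+P_{h,\alpha}\le 0$ in the relevant neighborhood. Since $H$ is already a Lyapunov function for the open-loop system, it remains so for the closed-loop system, and standard Lyapunov stability theory preserves local stability of the equilibrium. Trajectories starting sufficiently close to the equilibrium are trapped in a compact sublevel set of $H$, providing the boundedness required for the invariance argument.

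Finally, I would apply LaSalle on this compact forward-invariant set. Because both $-[H,H]_R$ and $P_{h,\alpha}$ are non-positive, $\dot{H}=0$ forces each term to vanish separately, which is precisely the condition $P_{h,\alpha}-[H,H]_R=0$. LaSalle then yields convergence to the largest invariant subset of $\{x\in\mathcal{D}:P_{h,\alpha}-[H,H]_R=0\}$. The main delicate point is the discontinuity of the indicator $\mathbb{I}_{\Psi<0}$ across the switching surface $\Psi=0$, which a priori threatens the use of smooth Lyapunov and invariance tools; however Theorem~\ref{thm:closed-form} explicitly guarantees Lipschitz continuity of the closed-loop vector field, so the classical arguments go through without recourse to nonsmooth analysis or differential inclusions.
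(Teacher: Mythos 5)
Your proposal is correct and follows essentially the same route as the paper's own (very terse) proof: read off $P_{h,\alpha}\le 0$ from the sign structure of (\ref{eq:safety-criticalPower}) under the hypothesis, conclude $\dot{H}\le 0$ from (\ref{eq: dissipationDiscussion}) so that $H$ remains a Lyapunov function, and invoke LaSalle via time-invariance of the closed loop. Your version merely adds detail the paper leaves implicit --- the explicit case split on the indicator, the compact sublevel set, and the remark that the Lipschitz continuity guaranteed by Theorem~\ref{thm:closed-form} handles the switching surface --- all of which is sound.
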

\begin{proof}
The power balance for the safety-critically controlled system is given by $\dot{H}=-[H,H]_R+P_{h,\alpha}$. Using $H$ as a candidate Lyapunov function, a sufficient condition for local stability preservation of the original system is that in a neighborhood of the equilibrium $P_{h,\alpha}\leq 0$, which is equivalent to $\mathbb{I}_{\Psi<0}[H,h]_{g g^{\top}}\leq0$. Due to time-invariance of closed-loop system, LaSalle's invariance principle can be invoked to conclude that the system will converge to the largest invariant subset in $\{x \in \mathcal{D} | P_{h,\alpha}- [H,H]_{R}=0 \}$.
\end{proof}

}
The condition $P_{h,\alpha}\leq 0$ forces the safety-critical controller to ``act as a damper", aiding dissipative dynamics in (\ref{eq:pH}) in the stabilization task. \markup{This control strategy is normally referred to as \textit{damping injection} in the pH framework. The closed-form expression (\ref{eq:safety-criticalPower}) can be used to design and interpret the non-trivial damping effects that specific CBFs have on the controlled system, similarly to the analysis in \cite{10136379}, where passivity, rather than stability, has been addressed.}

\subsubsection{\markup{CBF-based energy-aware designs}}
The power term induced by CBFs (\ref{eq:safety-criticalPower}) induces a constructive way to ``put energy back in control", in the sense envisioned in energy-based methods (see e.g., \cite{Ortega2001PuttingControl} and \cite{Stramigioli2015Energy-AwareRobotics}), going beyond stabilization purposes and possibly injecting positive power in the system if the task requires it. \markup{Since energy is a function of the state of the system, CBFs offer a constructive tool for designing tasks that involve regulating the closed-loop energy, such as those required for achieving periodic locomotion, which cannot be achieved through purely passive designs.}
The following motivating example shows the concept, and highlights the role of the $\mathcal{K}$ function $\alpha$ in the algorithm.

\begin{example}[\markup{Lower/Upper bound on total energy}]
\label{ex}
Consider the case $R=0$, that is, the system (\ref{eq:pH}) evolves on energetic isolines. 
Using a CBF $h(x)=c-H(x)$, i.e., $\mathcal{C}=\{x \in \mathcal{D} | H(x)\leq c\}$, the power equality (\ref{eq: dissipationDiscussion}) becomes:
\begin{equation}
\label{eq:limitEnergyCBF}
    \dot{H}=P_{h,\alpha}= - \mathbb{I}_{\Psi<0} (\alpha (c-H(x)), \qquad \Psi=  \alpha (c-H(x)).
\end{equation}
If $x \in \mathcal{C}$ then $P_{h,\alpha}=0$, while if $x \notin \mathcal{C}$ the safety-critical control extracts the negative power $P_{h}=\alpha( h)$, which can be modulated with $\alpha$. In case the $\alpha$ is linear, i.e., $\alpha(h)=\gamma h, \, \gamma \in \mathbb{R}^{+}$, the constant $\gamma$ acts as a proportional gain on the ``energy error" $c-H(x)$ measuring the distance from the closest energy value that the system is ultmately allowed to stay in. The (simple) stability of the minimum of the Hamiltonian is clearly preserved as $[h,H]_{gg^{\top}}=-[H,H]_{gg^{\top}}\leq 0$.
Using instead $h(x)=-c+H(x)$, i.e., $\mathcal{C}=\{x \in \mathcal{D} | H(x)\geq c\}$, the power equality (\ref{eq: dissipationDiscussion}) becomes:
\begin{equation}
\label{eq:limitEnergyCBF2}
    \dot{H}=P_{h,\alpha}= \mathbb{I}_{\Psi<0} (\alpha (c-H(x)), \qquad \Psi=  \alpha (H(x)-c).
\end{equation}
If $x \in \mathcal{C}$ then $P_{h,\alpha}=0$. If instead $x \notin \mathcal{C}$, $P_{h,\alpha}$ is positive, i.e., the safety-critical controller injects power (modulated by $\alpha$) into the system to reach $\mathcal{C}$. \markup{The system is shaped into a nonlinear oscillator at the desired energy level, a behavior achieved by a non passive control action.}
\end{example}

\section{Mechanical systems and energy-based CBFs}
\label{sec:4}

In order to better comprehend the proposed methodology and discuss how it complements standard energy-based approaches, let us specialise system (\ref{eq:pH}) to fully actuated mechanical systems, where we introduce the state $x=(q^{\top},p^{\top})^{\top} \in \mathbb{R}^{2n}$ as canonical Hamiltonian coordinates on the cotangent bundle of the configuration manifold of the system. 
In particular $q\in \mathbb{R}^{n}$ is the vector of generalized coordinates, $p\in \mathbb{R}^n$ denotes the generalized momenta, $p:=M(q)\dot q$, where $M(q)=M(q)^\top>0$ is the positive definite inertia matrix of the system. The equations of motions are given by \eqref{eq:pH} with 
$$
    J(x)-R(x) = \begin{bmatrix}
            0 & I_n\\
            -I_n & -D
        \end{bmatrix}, ~g(x) =\begin{bmatrix}
            0\\
            B
        \end{bmatrix}
$$
resulting in 
\begin{equation}\label{eq:mech_ph}
\begin{cases}
    \begin{aligned}
        \begin{bmatrix}
            \dot q\\
            \dot p
        \end{bmatrix}&=
        \begin{bmatrix}
            0 & I_n\\
            -I_n & -D
        \end{bmatrix}
        \begin{bmatrix}
            \frac{\partial H}{\partial q}\\
            \frac{\partial H}{\partial p}
        \end{bmatrix} +
        \begin{bmatrix}
            0\\
            B
\        \end{bmatrix}u\\
        y &= \begin{bmatrix}
            0 & B^{\top}
        \end{bmatrix} \begin{bmatrix}
            \frac{\partial H}{\partial q}\\
            \frac{\partial H}{\partial p}
        \end{bmatrix} =B^{\top} \dot{q}
    \end{aligned}
    \end{cases}
\end{equation}
where $H:\mathbb{R}^{2n} \to \mathbb{R}$ is the total \markup{mechanical} energy
\[
    H(q,p)= K_e(q,p) + V(q),
\]
where $K_e(q,p)=\frac{1}{2}p^\top M^{-1}(q)p$ is the kinetic energy, $V:\mathbb{R}^{n} \to \mathbb{R}$ maps the position state to conservative potentials (gravity, elastic effects), $D=D^\top \geq 0$ takes into account dissipative and friction effects, $B\in \mathbb{R}^{n \times n}$ is the (full rank) input matrix, $I_n$ indicates the $n \times n$ identity matrix and non specified dimensions of matrices are clear from the context. 

A safety-critical control action with CBF $h(q,p)$ specialises then the total power balance (\ref{eq: dissipationDiscussion}) into
\begin{equation}
\label{powerMech}
\dot{H}=   -\dot{q}^{\top}D{\dot{q}}  + P_{h,\alpha},
\end{equation}
where
\begin{equation}
\label{eq:PowerMechCbf}
P_{h,\alpha}=- \mathbb{I}_{\Psi<0} \frac{\dot{q}^{\top}BB^{\top} \frac{\partial h}{\partial p}}{\frac{\partial^{\top} h}{\partial p}BB^{\top} \frac{\partial h}{\partial p}} \Psi,
\end{equation}
and
\begin{equation}
\label{eq:Mechconstr}
\Psi=\{h,H\} -\frac{\partial^{\top} h}{\partial p}D{\dot{q}}+\alpha(h(q,p)),
\end{equation}
where we denoted $\{\cdot,\cdot\}$ the standard \textit{Poisson bracket}, which is a particular case of $\{\cdot,\cdot \}_J$ defined in Sec. \ref{sec:background}A.1. with the standard symplectic matrix $J$ of unconstrained mechanical systems displayed in (\ref{eq:mech_ph}).

The previous expressions allow to design safety-critical controllers with a qualitiative and quantitative insight on the effect of a certain CBF $h(q,p)$ on the power balance of the controlled mechanical system. A class of CBFs which deserves some discussion is 

\begin{equation}
\label{eq:genGenenergyCBF}
    h(q,p)=\pm K_e(q,p)+\bar{h}(q)+c,
\end{equation}
where $\bar{h}$ is \markup{any differentiable} function of the configuration variable and $c$ is a real number.

The sufficient condition of Proposition \ref{prop:prop} on stability-preserving safety-critical control (i.e., the condition on which safety-critical control \markup{acts as a damper}) reduces to 
\begin{equation}
    \mathbb{I}_{\Psi <0} \bigg( \dot{q}^{\top}BB^{\top} \frac{\partial h}{\partial p} \bigg)\leq 0.
\end{equation}
Since $\partial K_e(q,p)/\partial p=\dot{q}$, a subclass of CBFs in the form (\ref{eq:genGenenergyCBF}) which satisfy this condition is:
\begin{equation}
\label{eq:GenenergyCBF}
    h(q,p)=-K_e(q,p)+\bar{h}(q)+c,
\end{equation}
and the amount of instantaneous dissipated power results in: 
\begin{equation}
P_{h,\alpha}=\mathbb{I}_{\Psi < 0} \Psi \leq 0.
\end{equation}

\begin{remark}
     CBFs (\ref{eq:GenenergyCBF}) encompass those referred as \textit{energy-based CBFs} in \cite{Singletary2021Safety-CriticalSystems}, defined as (\ref{eq:GenenergyCBF}) with $c=0$ and $\bar{h}=\alpha_E h_{k}(q)$ where $h_k(q)$ is a \markup{CBF encoding} purely kinematic constraints (e.g., obstacle avoidance). In \cite{10136379} the safety-critical control induced by CBFs in the form (\ref{eq:GenenergyCBF}) was proven to preserve passivity of nominal controllers.
\end{remark}

Finally, we state the following insightful result.
\begin{proposition}
\label{prop:prop2}
Given a mechanical pH system (\ref{eq:mech_ph}) undergoing a safety-critical control induced by a CBF in the form (\ref{eq:genGenenergyCBF}), the expression for $\Psi$ in (\ref{eq:Mechconstr}) specialises to     
\end{proposition}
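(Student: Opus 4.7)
The plan is to substitute the ansatz $h(q,p)=\pm K_e(q,p)+\bar{h}(q)+c$ directly into the expression (\ref{eq:Mechconstr}), exploit the canonical symplectic structure of the Poisson bracket encoded in (\ref{eq:mech_ph}) to cancel the purely kinetic contributions, and collect the surviving terms.

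First I would compute the partial derivatives of the candidate CBF: $\frac{\partial h}{\partial q}=\pm \frac{\partial K_e}{\partial q}+\frac{\partial \bar{h}}{\partial q}$ and $\frac{\partial h}{\partial p}=\pm \frac{\partial K_e}{\partial p}=\pm \dot{q}$, using the canonical identification $\dot{q}=M^{-1}(q)p=\partial K_e/\partial p$. Next I would expand the Poisson bracket
\[
\{h,H\}=\frac{\partial^{\top} h}{\partial q}\frac{\partial H}{\partial p}-\frac{\partial^{\top} h}{\partial p}\frac{\partial H}{\partial q}
\]
with $H=K_e+V$, so that $\partial H/\partial p=\dot{q}$ and $\partial H/\partial q=\partial K_e/\partial q+\partial V/\partial q$.

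The crux of the argument is that the two contributions involving $\partial K_e/\partial q$ produced by this expansion appear with opposite signs, one from the first slot of the bracket and one from the second, and since $\frac{\partial^{\top} K_e}{\partial q}\dot{q}$ equals its transpose $\dot{q}^{\top}\frac{\partial K_e}{\partial q}$ (both are scalars), they cancel identically. What survives from the bracket is $\frac{\partial^{\top} \bar{h}}{\partial q}\dot{q}\mp \dot{q}^{\top}\frac{\partial V}{\partial q}$. Combining this with the dissipation contribution $\frac{\partial^{\top} h}{\partial p}D\dot{q}=\pm \dot{q}^{\top}D\dot{q}$ and assembling the pieces yields
\[
\Psi=\frac{\partial^{\top} \bar{h}}{\partial q}\dot{q}\mp \dot{q}^{\top}\!\left(\frac{\partial V}{\partial q}+D\dot{q}\right)+\alpha(h(q,p)).
\]

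I do not expect a genuine obstacle; the derivation is essentially bookkeeping. The only subtlety worth double-checking is the kinetic-energy cancellation, which hinges on the block off-diagonal (canonical) form of the structure matrix in (\ref{eq:mech_ph}) together with the elementary identity that a scalar agrees with its own transpose. A secondary sanity check is that, when $\bar{h}$ is independent of $q$ and $D=0$, $\Psi$ reduces to $\alpha(h)$, which is consistent with the fact that $h$ would then be a conserved quantity along the free dynamics.
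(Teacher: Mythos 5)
Your derivation is correct and is essentially the paper's own argument: the paper cancels the kinetic contribution by invoking skew-symmetry and bilinearity of the Poisson bracket ($\{\pm K_e,K_e\}=0$) before evaluating $\{\pm K_e,V\mp\bar h\}$ in coordinates, which is the same cancellation you obtain by expanding the bracket directly and using that the scalar $\partial_q^{\top}K_e\,\dot q$ equals its transpose. One remark worth flagging: your dissipation term $\mp\dot q^{\top}D\dot q$ (coming from $-\partial_p^{\top}h\,D\dot q$ with $\partial_p h=\pm\dot q$) is the correct sign-tracked version for both choices in (\ref{eq:genGenenergyCBF}), whereas the displayed equation (\ref{eq:constraint}) writes $+\dot q^{\top}D\dot q$, which agrees with your formula only for the lower-sign case $h=-K_e+\bar h+c$ (the case actually used in (\ref{eq:GenenergyCBF}) and in the simulations).
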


\begin{equation}
\label{eq:constraint}
\Psi=\mp \dot{V}+\dot{\bar{h}} +\dot{q}^{\top}D{\dot{q}}+\alpha(h(q,p))
\end{equation}
\begin{proof}
We use properties of skew-symmetry and bilinearity of the Poisson bracket to calculate the expression $\{h,H\}$ for (\ref{eq:genGenenergyCBF}):
$\{\pm K_e+\bar{h}, K_e+V \}=\{ \pm K_e,V \}+\{-K_e,\bar{h} \}=\{ \pm K_e, V \mp \bar{h}\}$. Using the definition of the bracket and the fact that $V$ and $\bar{h}$ do not depend on $p$, one gets $\{ \pm K_e, V \mp \bar{h}\}=\mp \frac{\partial ^{\top} K_e}{\partial p} \frac{\partial (V \mp \bar{h})}{\partial q}=\mp \frac{\partial^{\top}(V \mp \bar{h})}{\partial q} \dot{q}=\dot{\bar{h}} \mp \dot{V}.$
\end{proof}
\markup{Expression (\ref{eq:constraint})} provides interpretability in the structure of $\Psi$ \markup{and its dependence on the CBF parameter $\alpha$}. \markup{A negative $\Psi$ indicates that the safety-critical controller is active, and (\ref{eq:constraint}) is the injected or extracted power (depending on the sign of (\ref{eq:genGenenergyCBF})).} 




\section{Simulations \markup{and Discussion}}
\label{sec:sims}

\begin{figure}[t]
        \centering
        \begin{subfigure}[b]{.52\linewidth}
            \includegraphics[width=\linewidth]{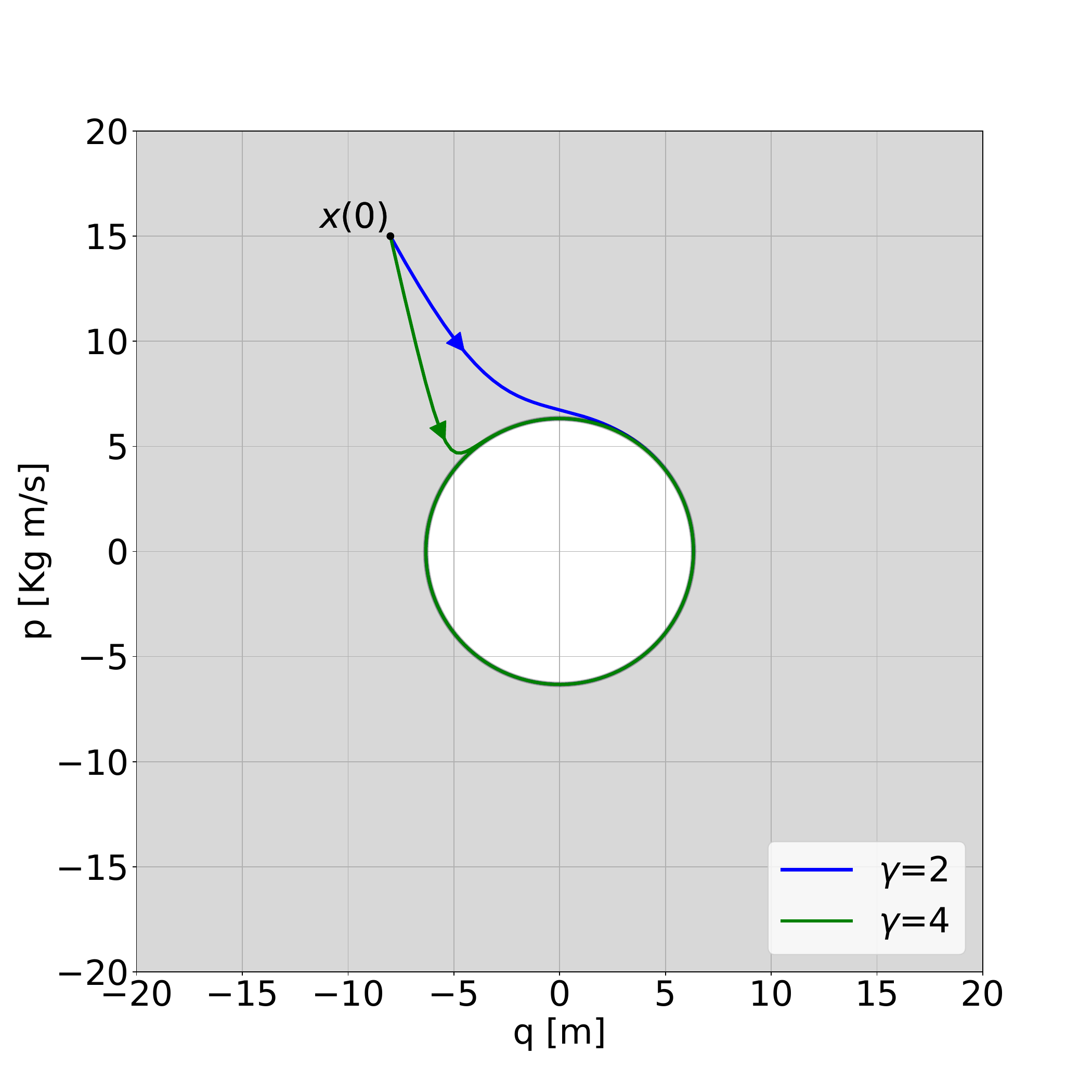}
            \label{fig:gull}
        \end{subfigure}\hspace{-5mm}
        \begin{subfigure}[b]{.52\linewidth}
            \includegraphics[width=\linewidth]{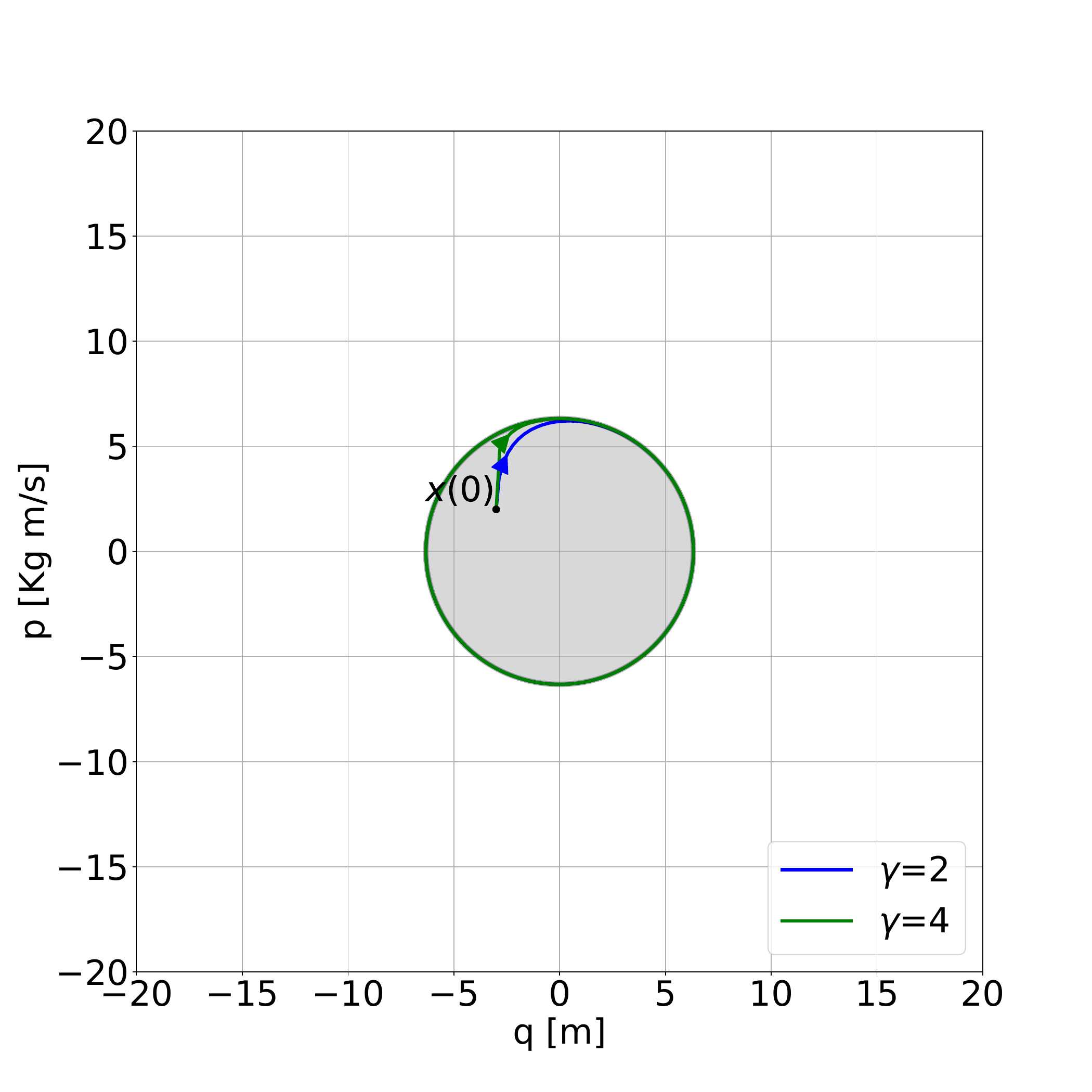}
            \label{fig:tiger}
        \end{subfigure}\vspace{-6mm}
        
        \begin{subfigure}[b]{.52\linewidth}
            \includegraphics[width=\linewidth]{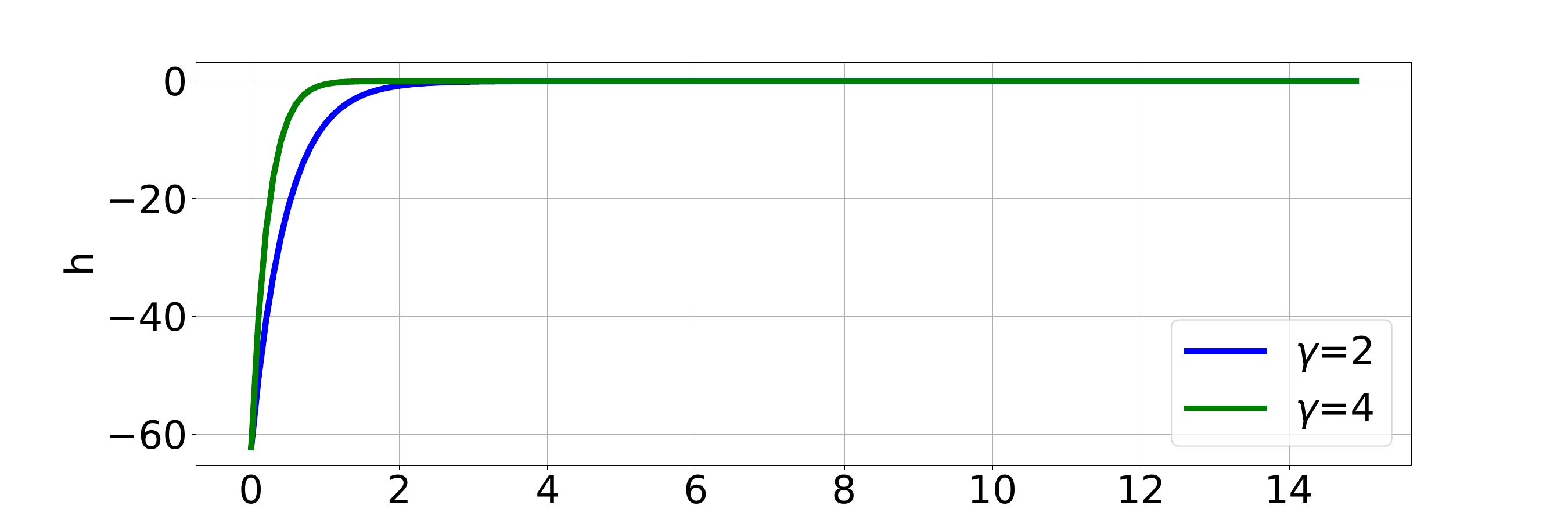}
            \label{fig:mouse} 
        \end{subfigure}\hspace{-5mm}
        \begin{subfigure}[b]{.52\linewidth}
            \includegraphics[width=\linewidth]{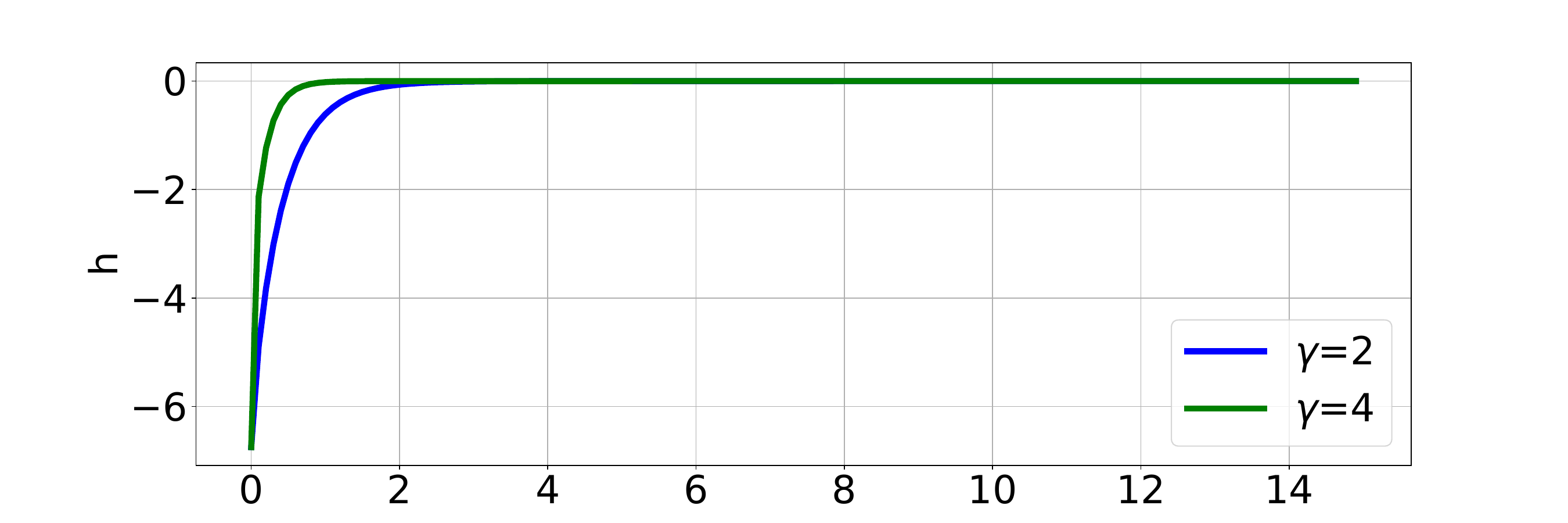}
            \label{fig:mouse}
        \end{subfigure}\vspace{-4mm}
        
        \begin{subfigure}[b]{.52\linewidth}
            \includegraphics[width=\linewidth]{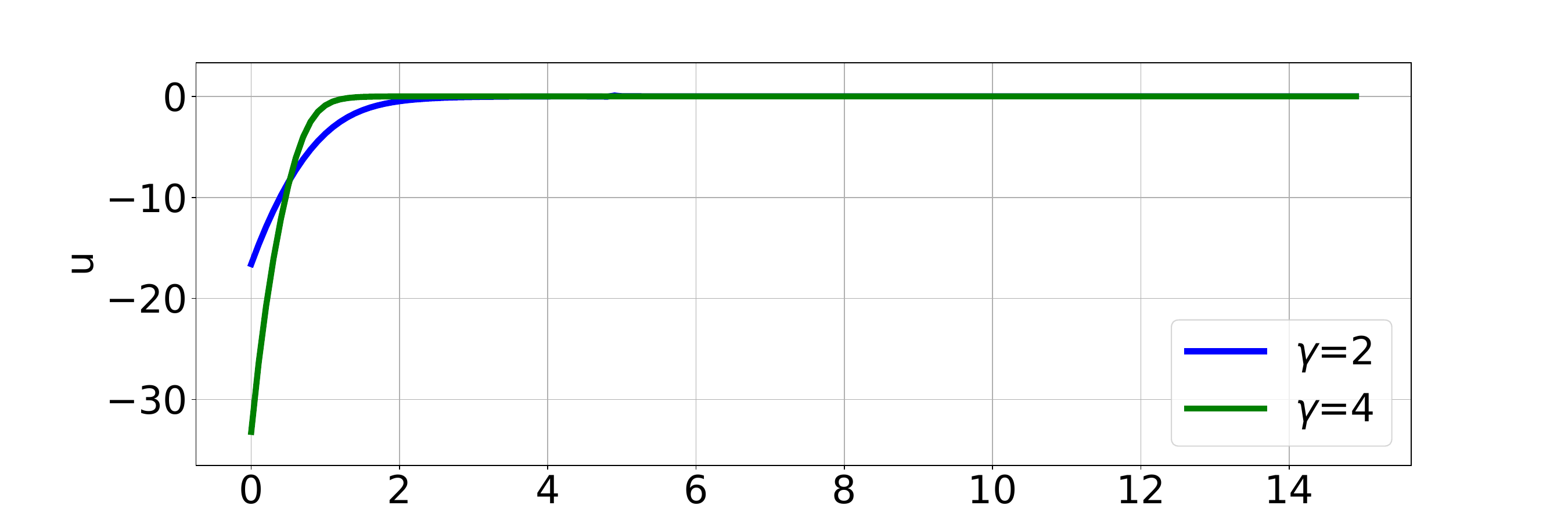}
            \label{fig:gull}
        \end{subfigure}\hspace{-5mm}
        \begin{subfigure}[b]{.52\linewidth}
            \includegraphics[width=\linewidth]{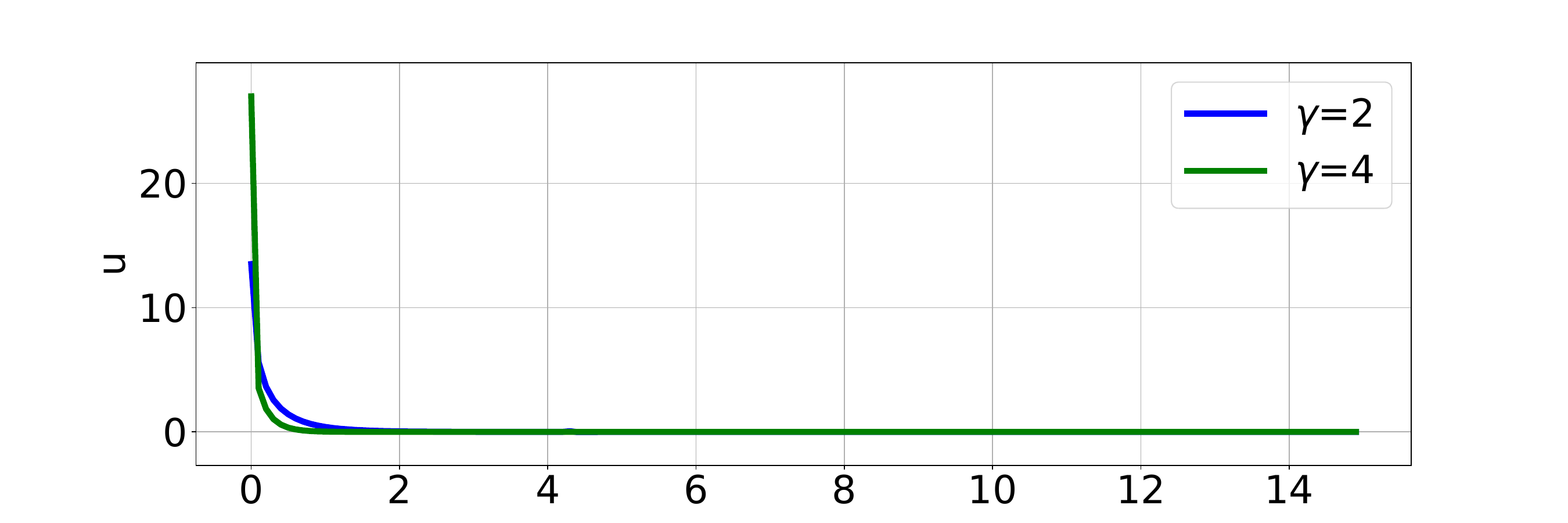}
            \label{fig:tiger}
        \end{subfigure}\vspace{-5mm}
        
        \begin{subfigure}[b]{.52\linewidth}
            \includegraphics[width=\linewidth]{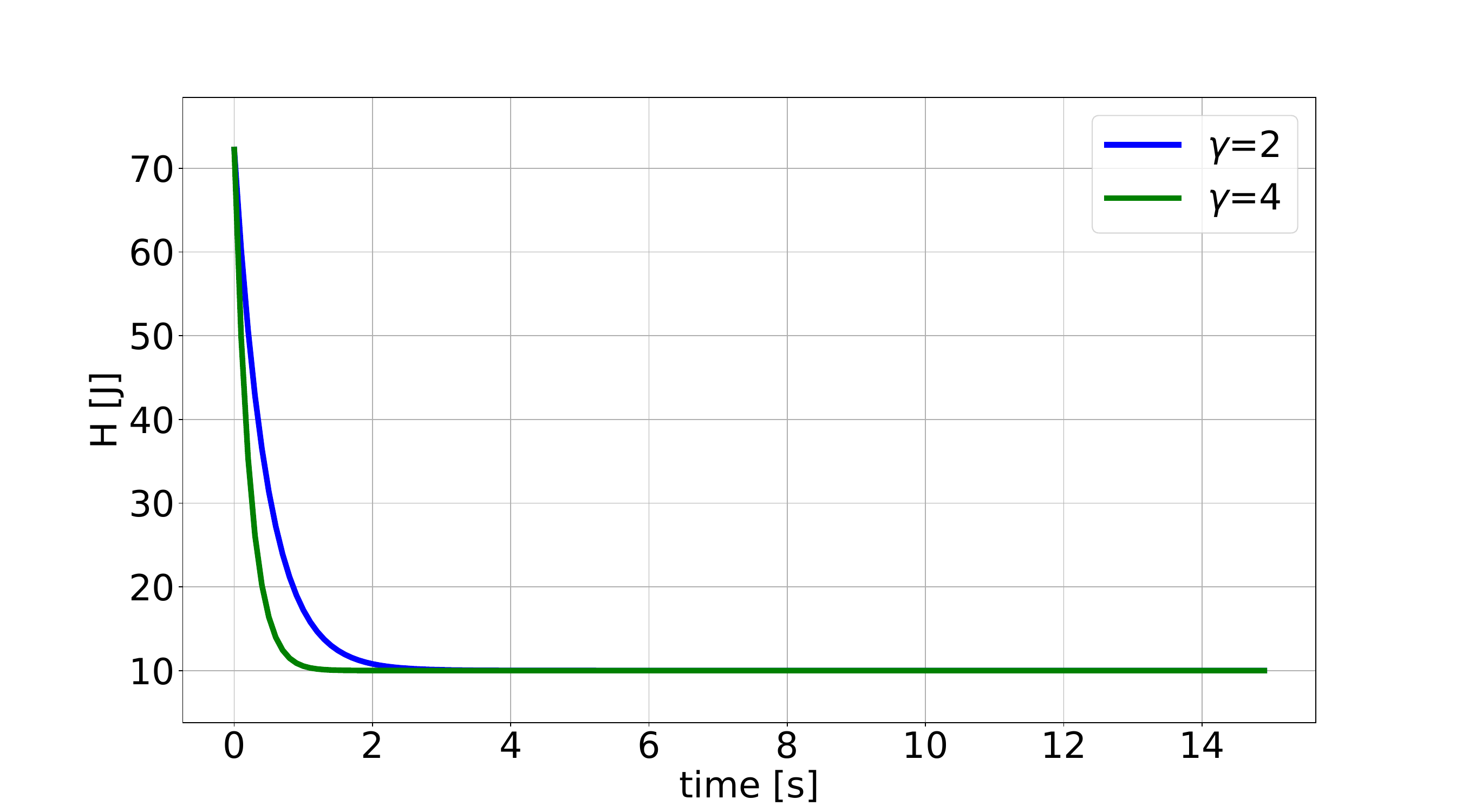}
            \label{fig:gull}
        \end{subfigure}\hspace{-5mm}
        \begin{subfigure}[b]{.52\linewidth}
            \includegraphics[width=\linewidth]{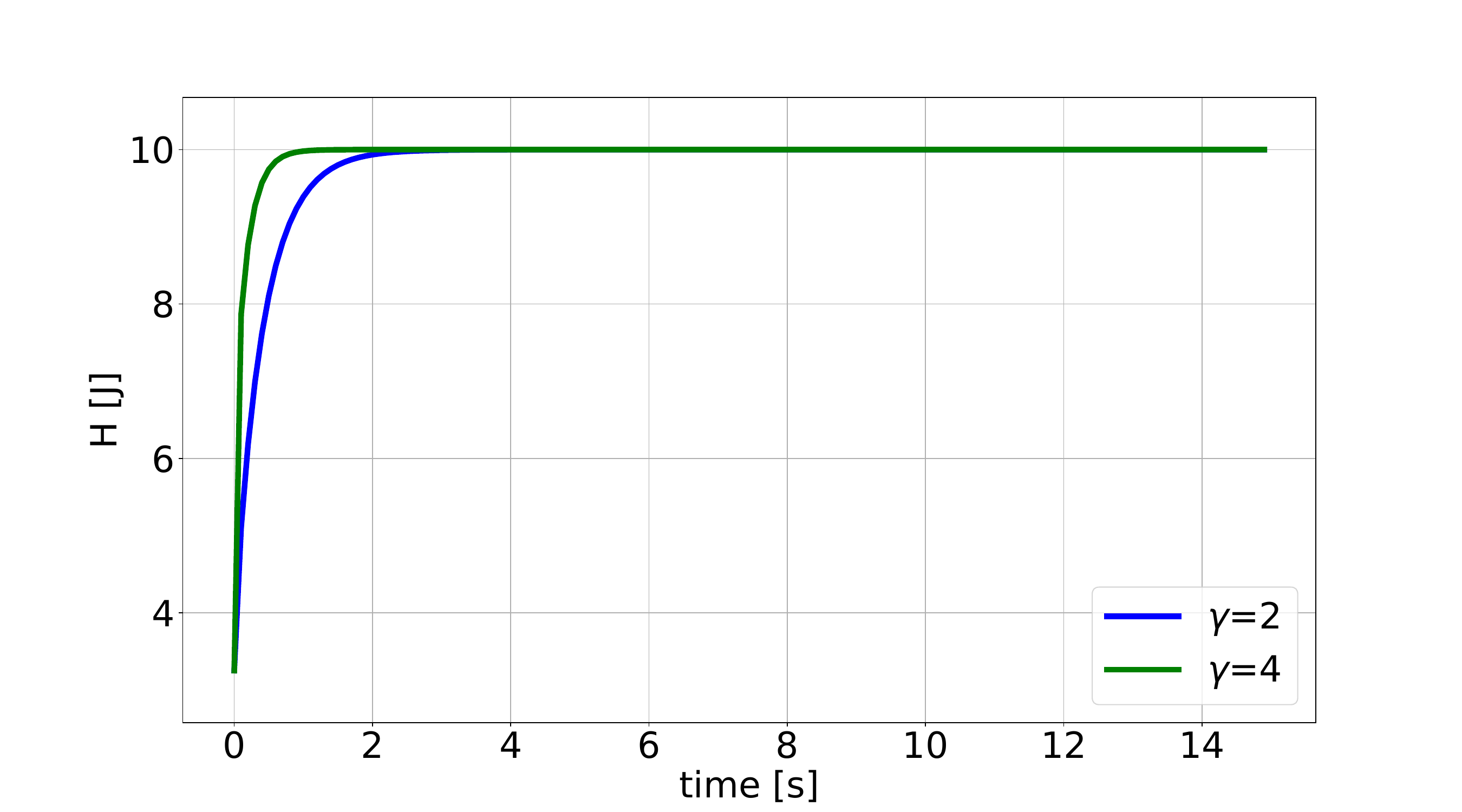}
            \label{fig:tiger}
        \end{subfigure}\vspace{-4mm}
        \caption{Mass-Spring system. Bounding energy from above (left) and below (right). From above: i) phase space trajectory, ii) CBF, iii) control input, and iv) total energy.} \label{fig:mass_spring_total}
        \vspace{-3mm}
\end{figure}

\begin{figure}[t]
\centering
\begin{subfigure}[b]{.52\linewidth}
\includegraphics[width=\linewidth]{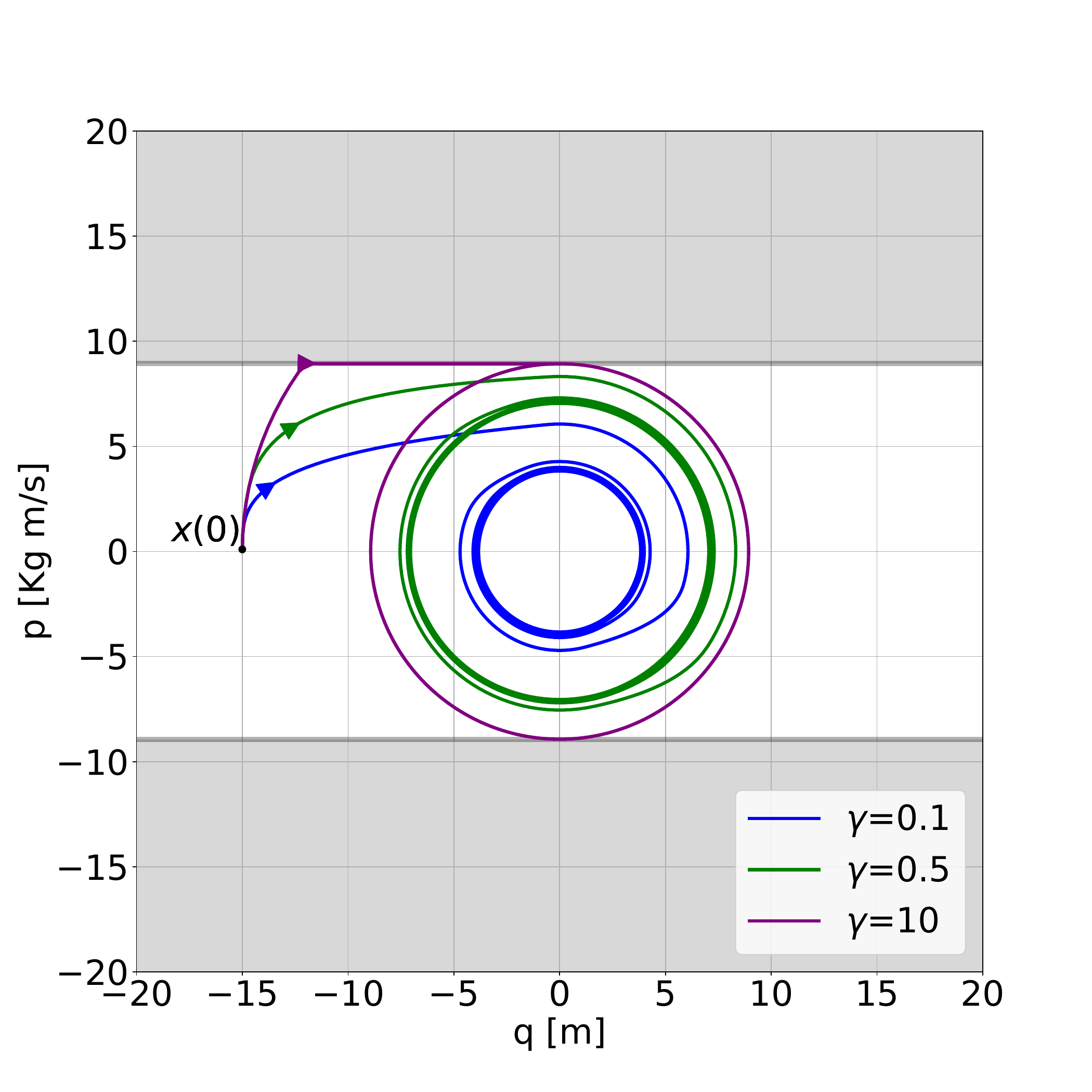}
\label{fig:gull}
\end{subfigure}\hspace{-5mm}
\begin{subfigure}[b]{.52\linewidth}
\includegraphics[width=\linewidth]{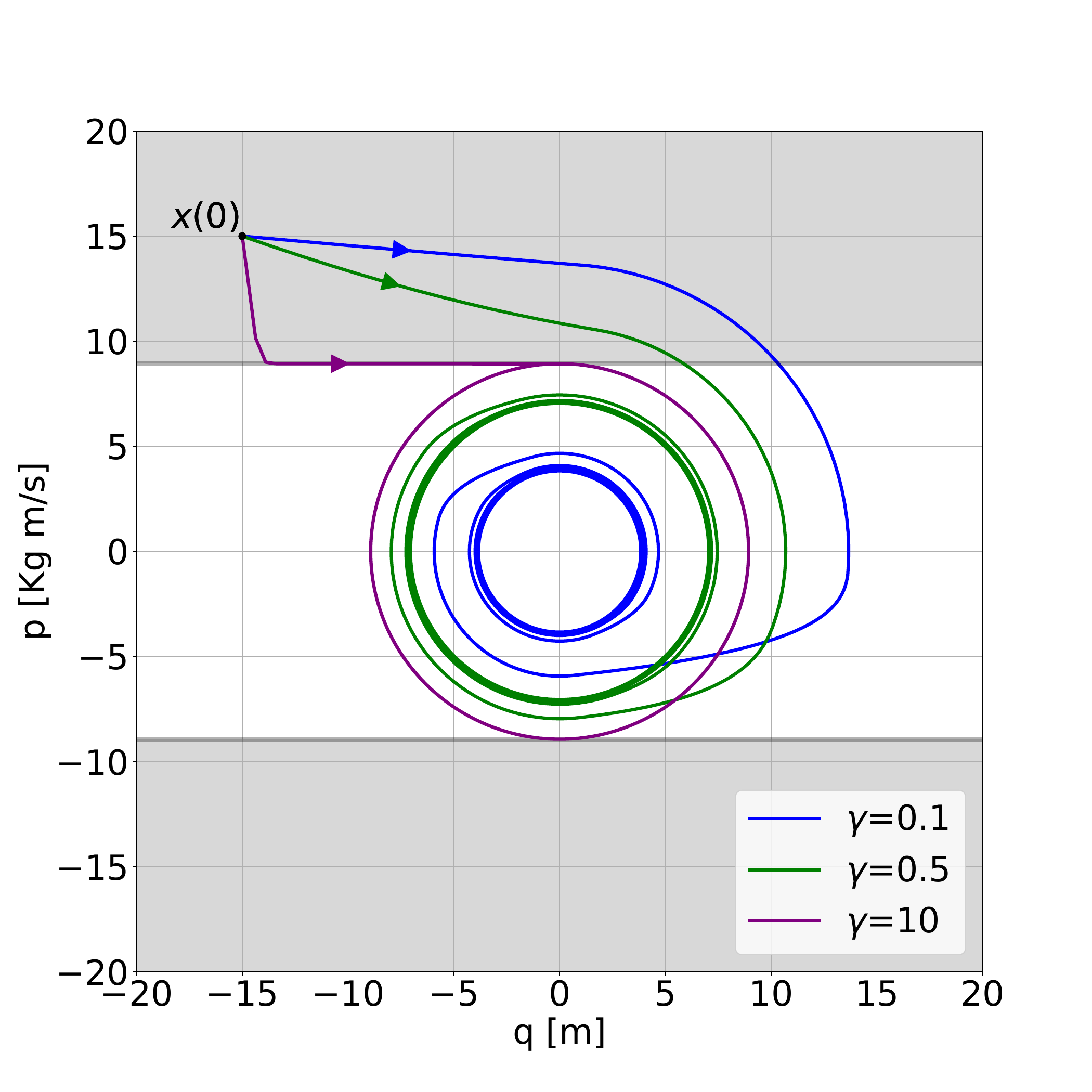}
\label{fig:tiger}
\end{subfigure}\vspace{-6mm}

\begin{subfigure}[b]{.52\linewidth}
\includegraphics[width=\linewidth]{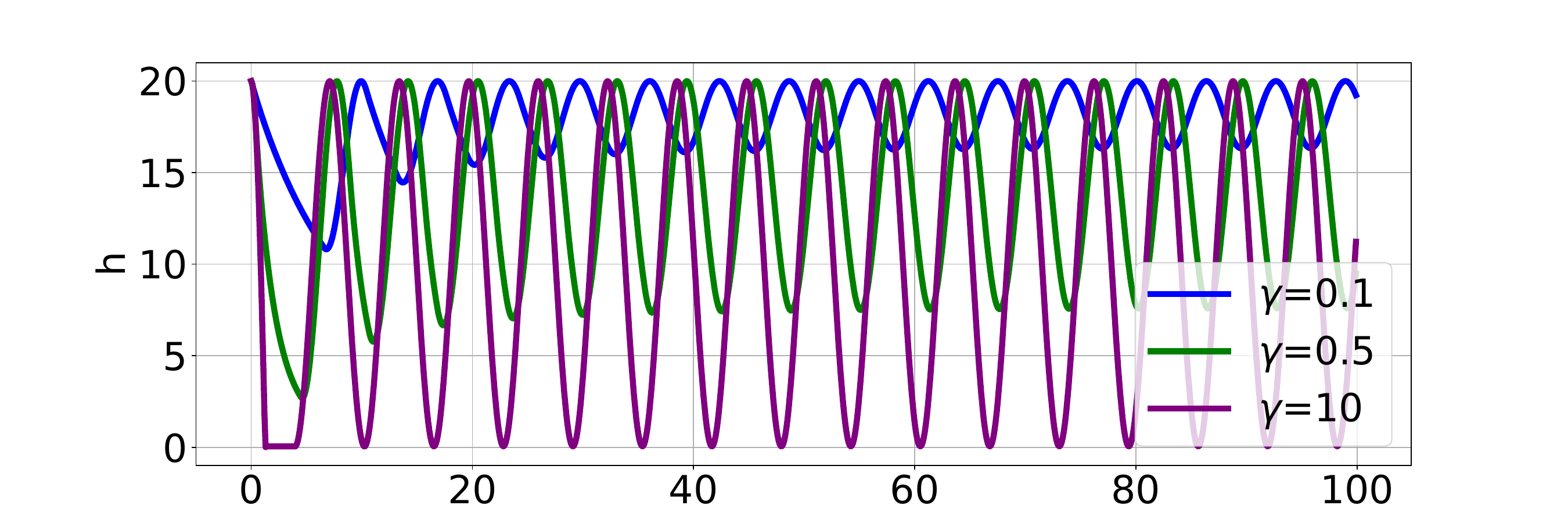}
\label{fig:mouse} 
\end{subfigure}\hspace{-5mm}
\begin{subfigure}[b]{.52\linewidth}
\includegraphics[width=\linewidth]{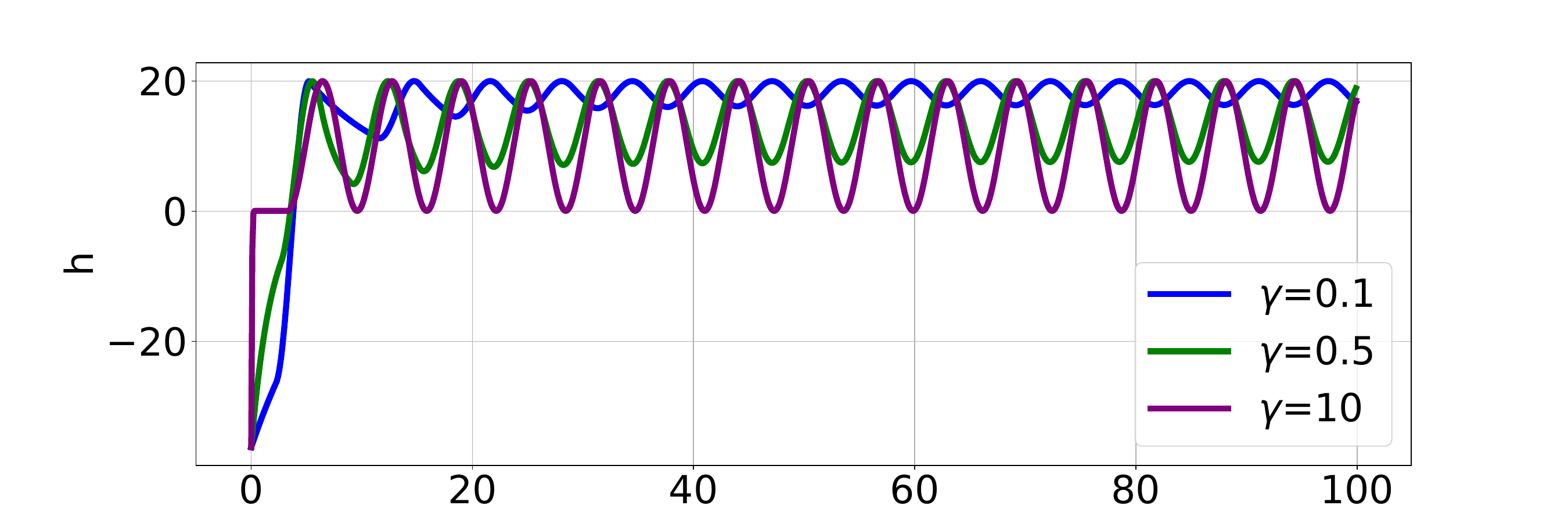}
\label{fig:mouse}
\end{subfigure}\vspace{-4mm}

\begin{subfigure}[b]{.52\linewidth}
\includegraphics[width=\linewidth]{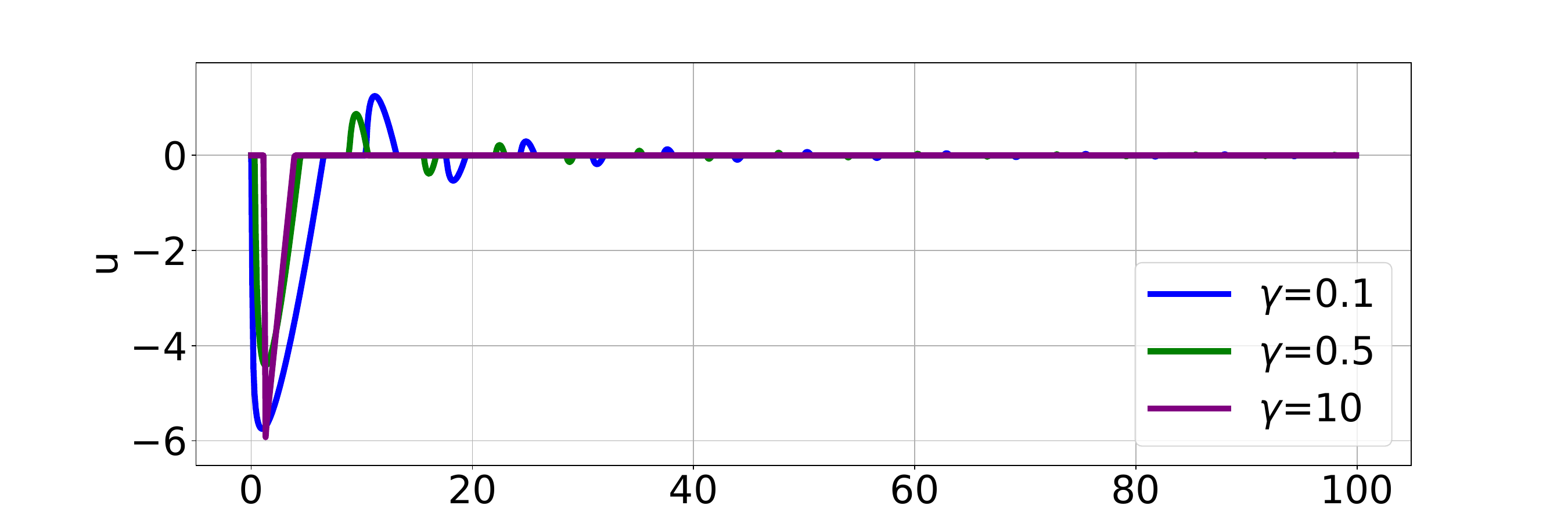}
\label{fig:gull}
\end{subfigure}\hspace{-5mm}
\begin{subfigure}[b]{.52\linewidth}
\includegraphics[width=\linewidth]{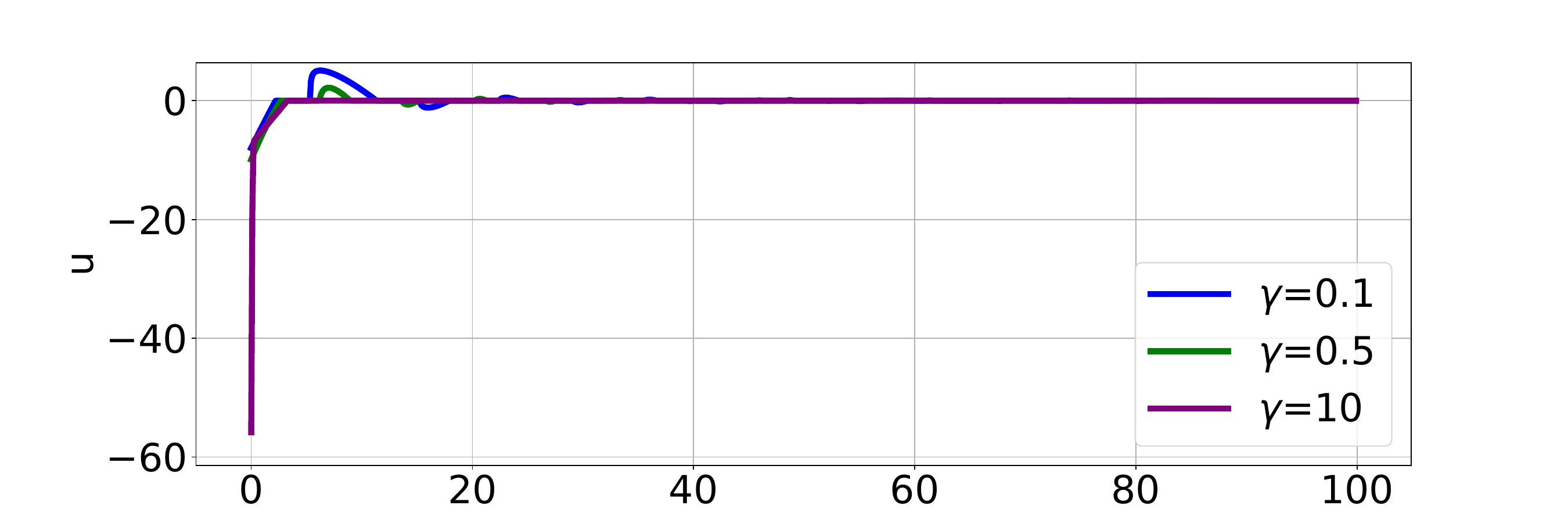}
\label{fig:tiger}
\end{subfigure}\vspace{-5mm}

\begin{subfigure}[b]{.52\linewidth}
\includegraphics[width=\linewidth]{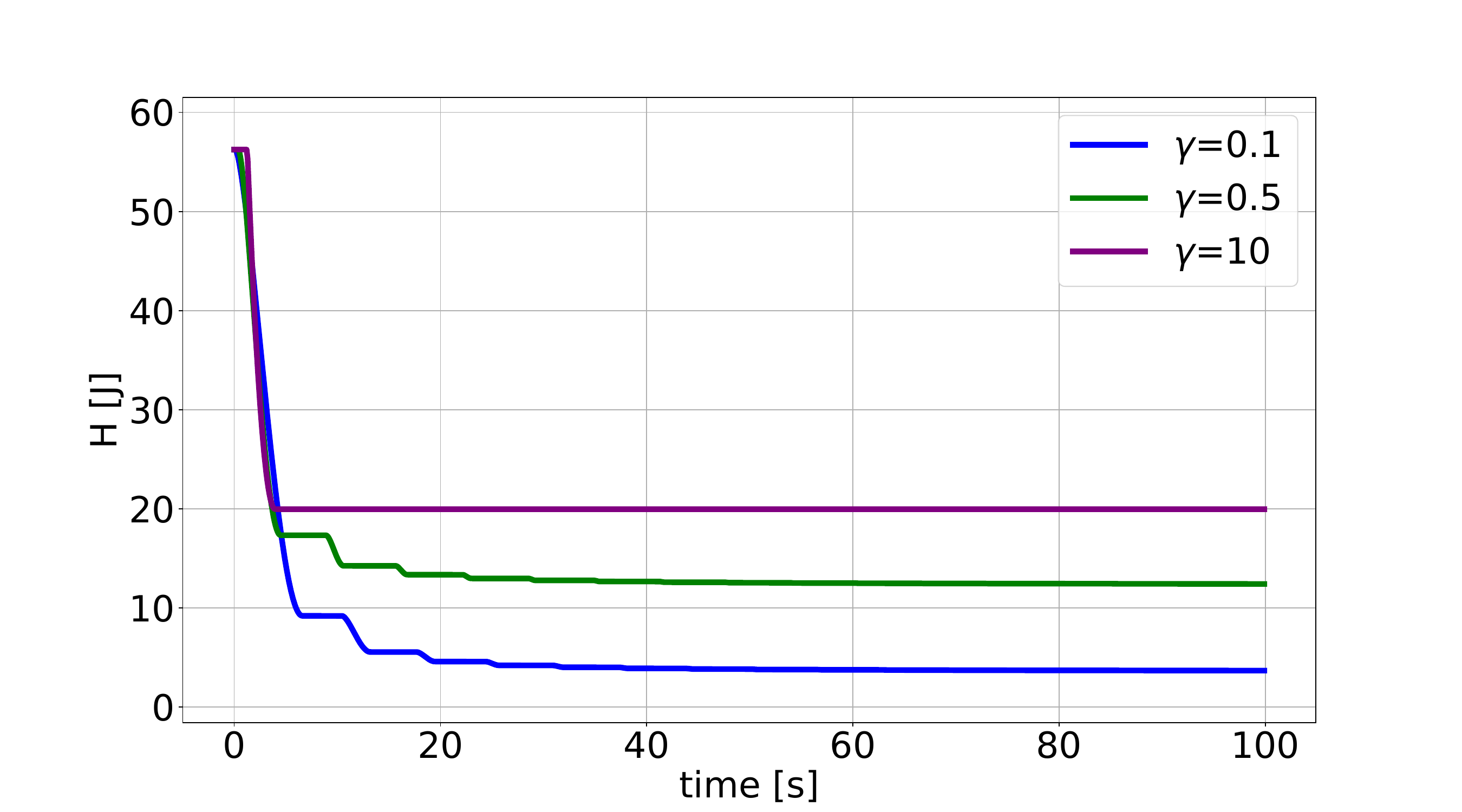}
\label{fig:gull}
\end{subfigure}\hspace{-5mm}
\begin{subfigure}[b]{.52\linewidth}
\includegraphics[width=\linewidth]{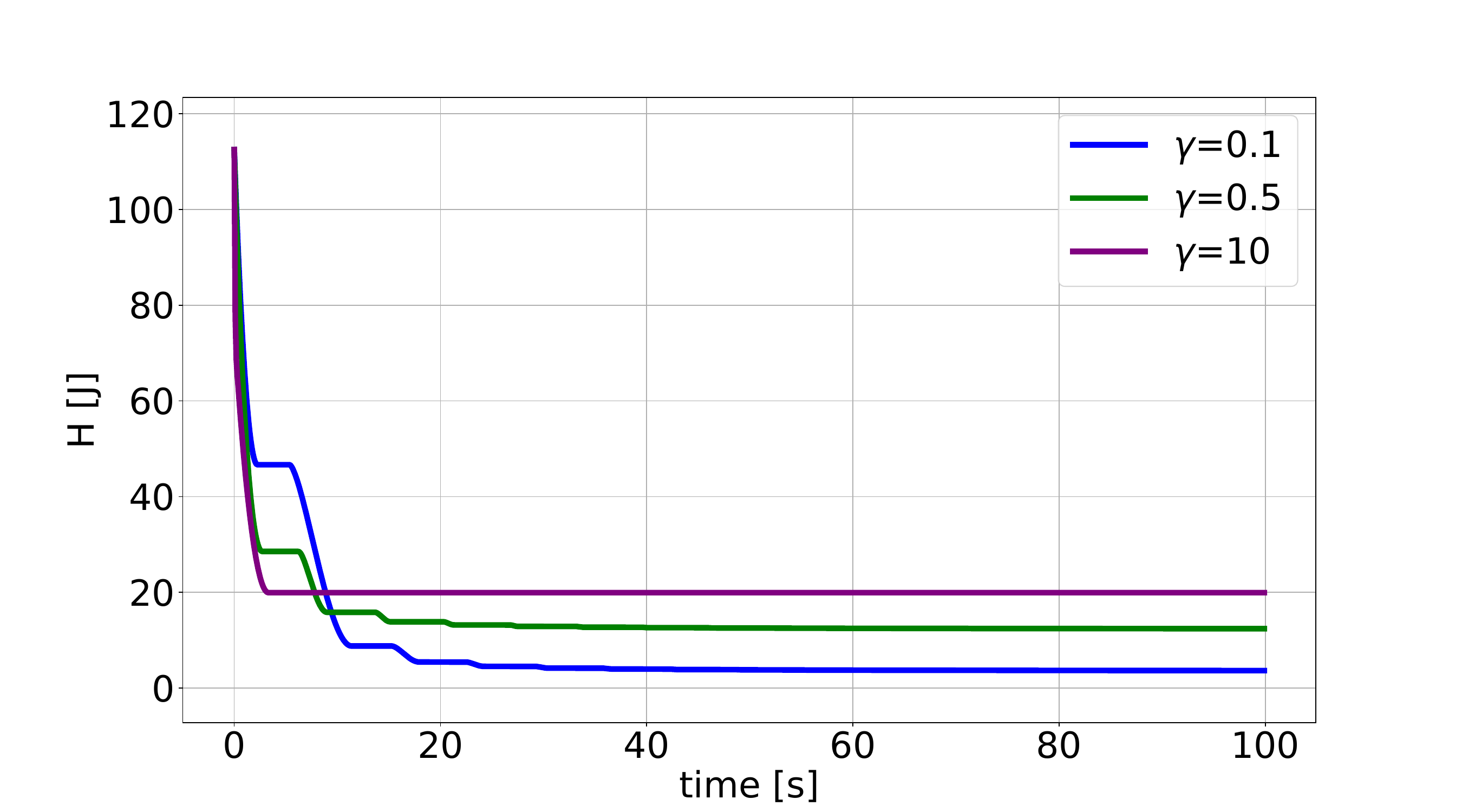}
\label{fig:tiger}
\end{subfigure}\vspace{-4mm}
\caption{Limiting kinetic energy from inside (left) and outside (right) the safe set. From above: i) phase space trajectory, ii) CBF, iii) control input, and iv) total energy.}
\label{fig:mass-spring_kinetic}
\end{figure}




In this section, we display and discuss numerical simulations of mechanical systems undergoing safety-criticl control with CBFs in the form (\ref{eq:genGenenergyCBF}). When the phase space is plotted, the safe set is represented in white and its complement is represented in gray.  The function $\alpha$ is chosen as $\alpha(h)=\gamma h$, and simulations are performed for different positive scalars $\gamma$. \markup{When not reported, units are clear from the context.}
\subsubsection{Bounding total energy from below and above}
We initially consider as a plant a simple oscillator (mass-spring) with no dissipation, i.e., system (\ref{eq:mech_ph}) with $D=0$, $B=1$,  $H=\frac{p^2}{2m}+\frac{k q^2}{2}$, $k=0.5$ and $m=2$. The results are shown in Fig. \ref{fig:mass_spring_total}. The left column implements a safety-critical controller given by a CBF bounding the total energy from above ($h=-H+c$) while the second column uses a CBF bounding the total energy from below ($h=H-c$), with energy limit $c=10$. The proposed approach generates new ways to implement mechanical oscillations at the desired energy level. As theoretically derived from example \ref{ex}, we see that the safety critical input is triggered once the system is outside the safe set, and we see how the $\gamma$ parameter can be set to regulate the power the controller dissipates or injects.
\subsubsection{Bounding kinetic energy}
Fig. \ref{fig:mass-spring_kinetic} shows simulations with the same plant, but with a CBF designed to limit only the kinetic energy, i.e., $h=-K_e+c$ where $c=20$. The left column corresponds to experiments in which the system starts in the safe set, and the right column to one in which the system starts outside the safe set. The results show both the invariance and the asymptotic stability of the safe set. Looking at the total energy, it appears clearly that, as predicted by the theory, the safety-critical controller acts purely as a damper, with the amount of damping which be regulated with $\gamma$. The activation of safety-critical effects is no longer trivially related to whether the system is in the safe set or not, and can be inspected by checking the positivity of the constraint (\ref{eq:constraint}).

 \subsubsection{\markup{Energy-based nonlinear oscillator}}
\markup{In Fig. \ref{fig:double_pendulum_pump} we show simulations for a nonlinear plant: a double pendulum with masses $m_1=m_2=1.5$ and lengths $l_1=l_2=1$, with natural dissipation $b=0.3$ in both the joints. This system is described by (\ref{eq:mech_ph}) with $D=b I_2$, $B=I_2$ and $H=\frac{p^{\top}M^{-1}(q)p}{2}+V(q)$, with $V(q)$ the gravitational potential, taking negative values below the first joint. The CBF is designed to limit the total energy, i.e., $h=-H+c$ where $c=-40$, and it is activated after $10s$ from the start of the simulation. The system, which in the first phase was dissipating its initial mechanical energy, converges to the safe set after the CBF activation. Interestingly, the controlled system converges to a limit cycle: the natural dissipation and the power injected by the CBF counterbalance each other on the time period on which the limit cycle emerges, resembling a periodic locomotion task. This procedure constitutes a new, energy-based, way to generate periodic motions in mechanical systems.}
\begin{figure}[t]
    \centering
    \begin{minipage}{.48\linewidth} 
        \raggedright
        \begin{subfigure}[b]{\linewidth}
            \includegraphics[width=1.1\linewidth]{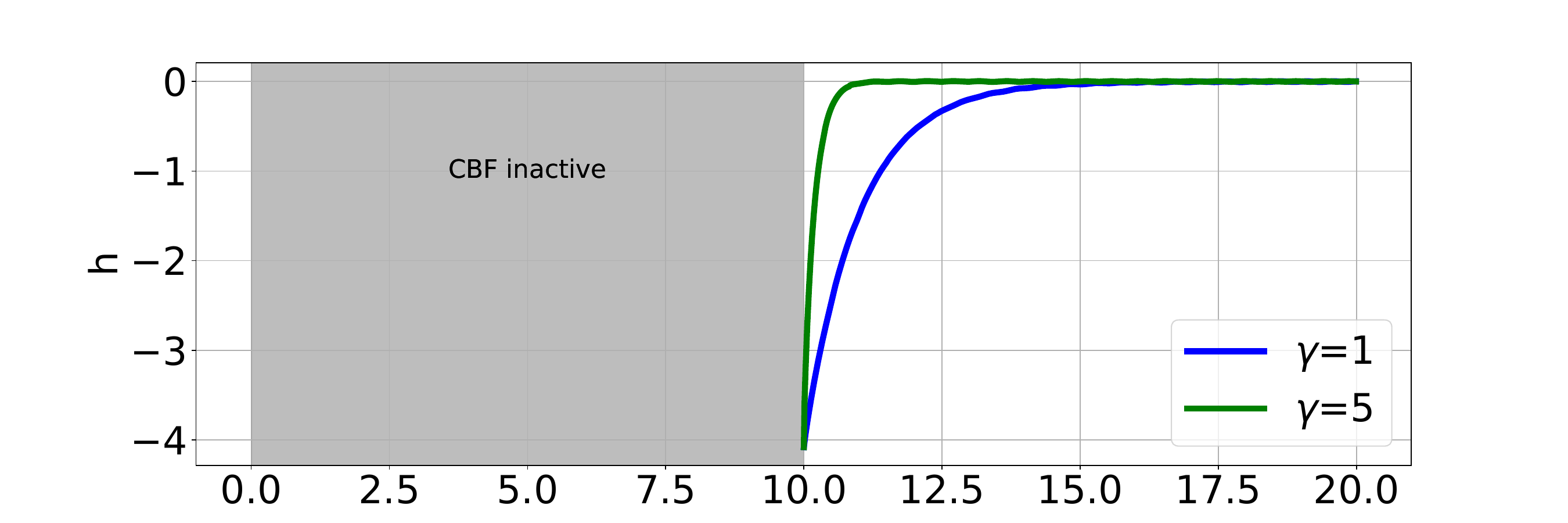}
            \label{fig:total_energy}
        \end{subfigure}\vspace{-4mm}
        
        \begin{subfigure}[b]{\linewidth}
            \includegraphics[width=1.1\linewidth]{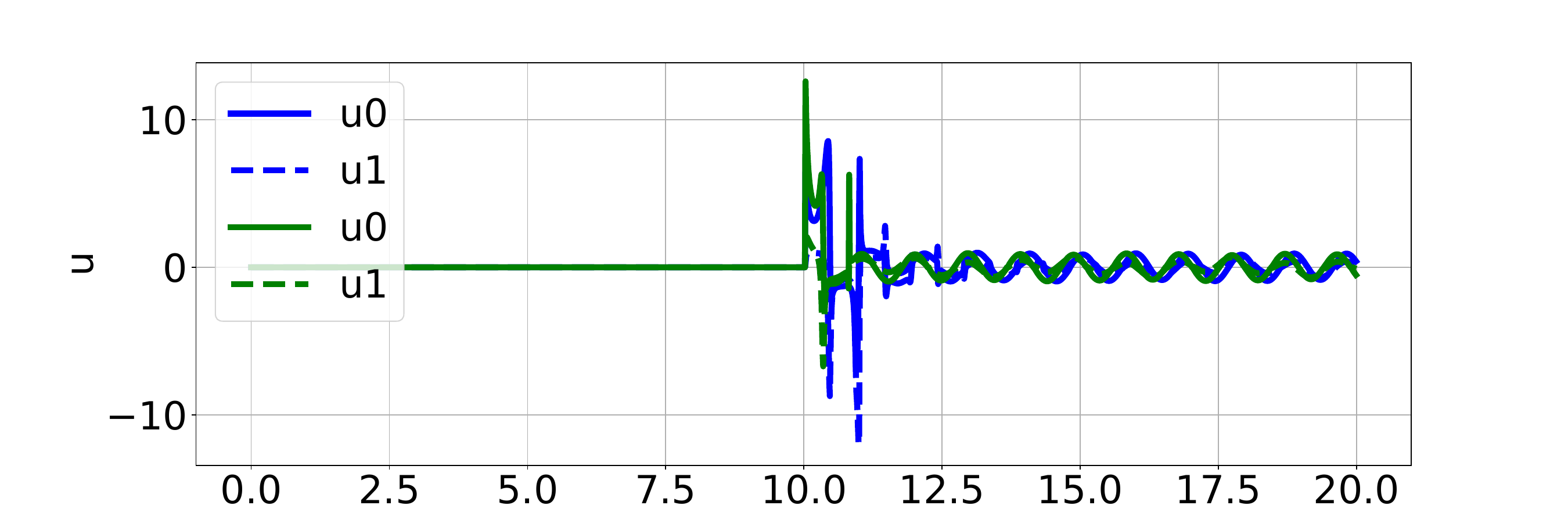}
            \label{fig:cbf} 
        \end{subfigure}\vspace{-3mm}
        
        \begin{subfigure}[b]{\linewidth}
            \includegraphics[width=1.1\linewidth]{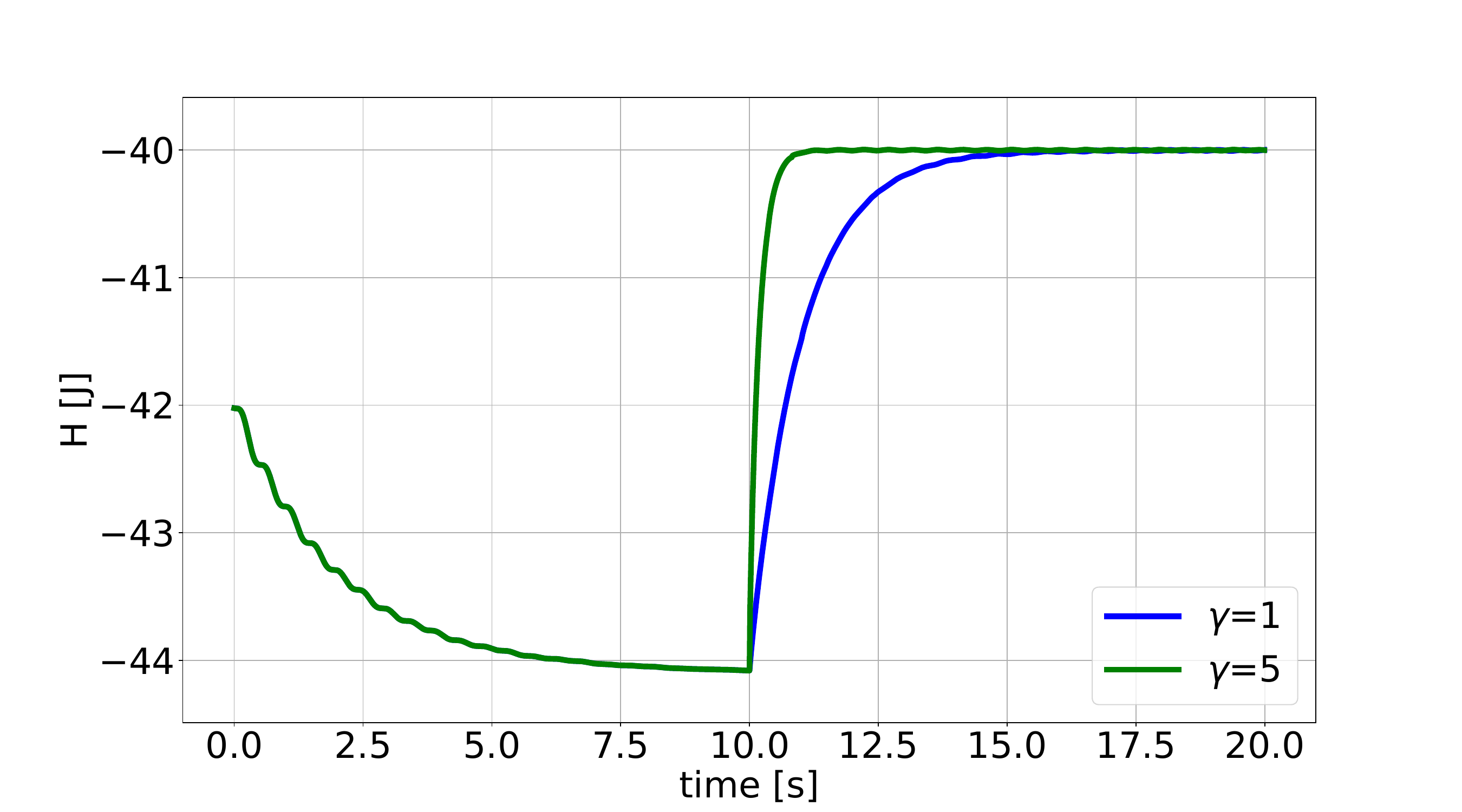}
            \label{fig:control}
        \end{subfigure} 
    \end{minipage}%
    \hfill
    \begin{minipage}{.52\linewidth} 
        \centering \vspace{-3mm}
        \begin{subfigure}[b]{\linewidth}
            \includegraphics[width=\linewidth]{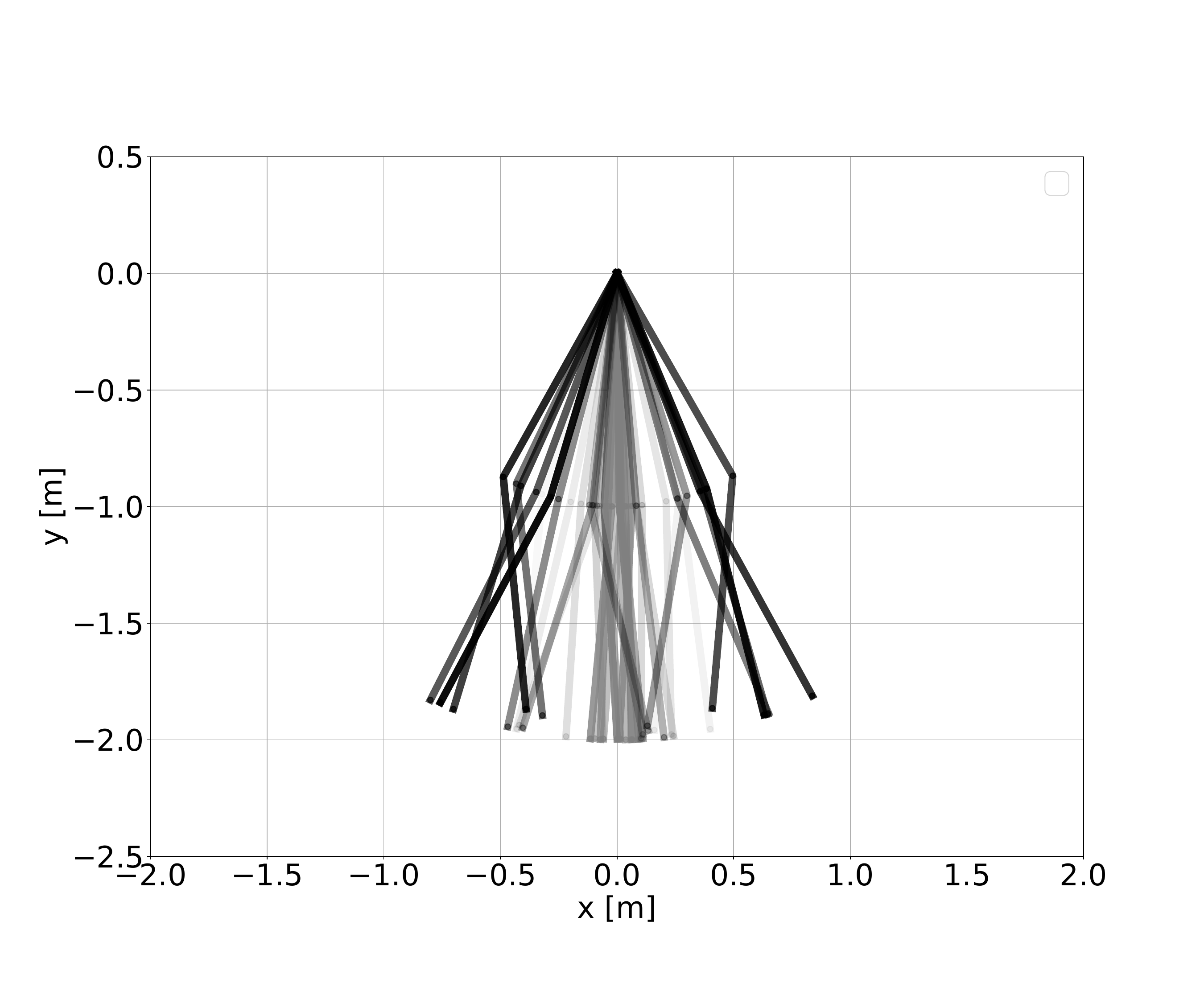}
            \label{fig:cartesian}
        \end{subfigure} \vspace{-10mm}
        
        \begin{subfigure}[b]{\linewidth}
            \includegraphics[width=\linewidth]{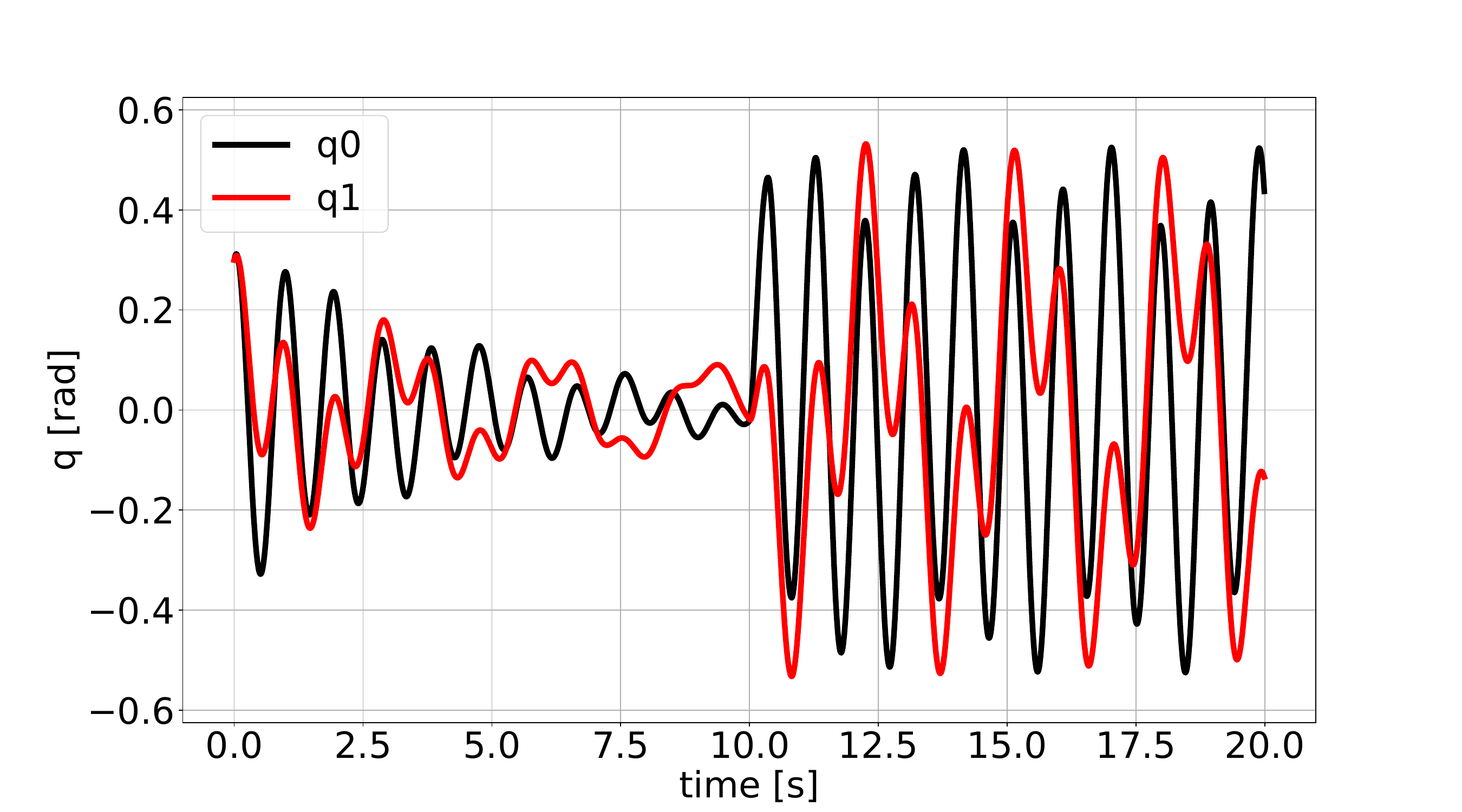}
            \label{fig:state}
        \end{subfigure}  
    \end{minipage}
    
    \vspace{-4mm}

        \begin{subfigure}[b]{.52\linewidth}
            \includegraphics[width=\linewidth]{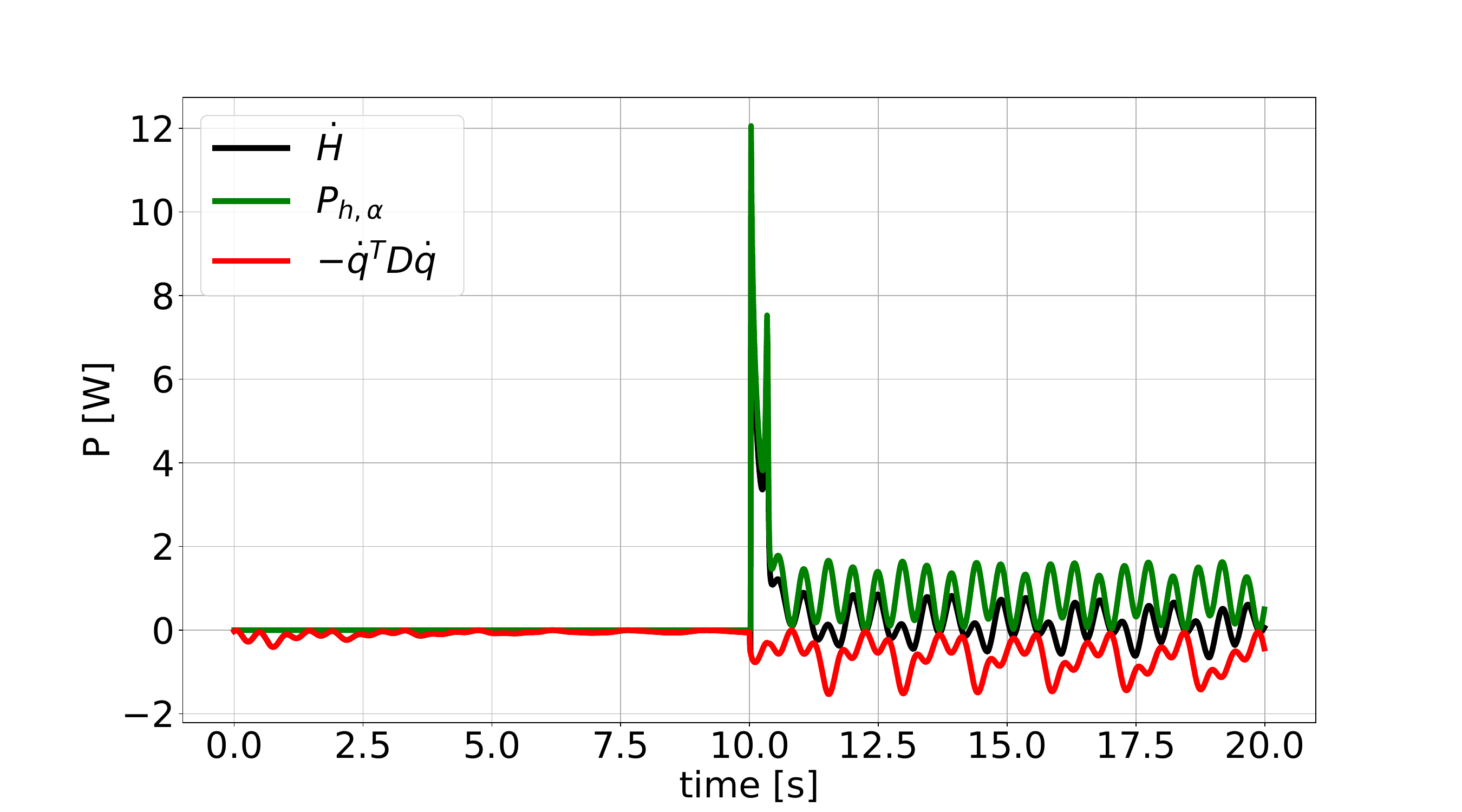}
            \label{fig:gull}
        \end{subfigure}\hspace{-5mm}
        \begin{subfigure}[b]{.52\linewidth}
            \includegraphics[width=\linewidth]{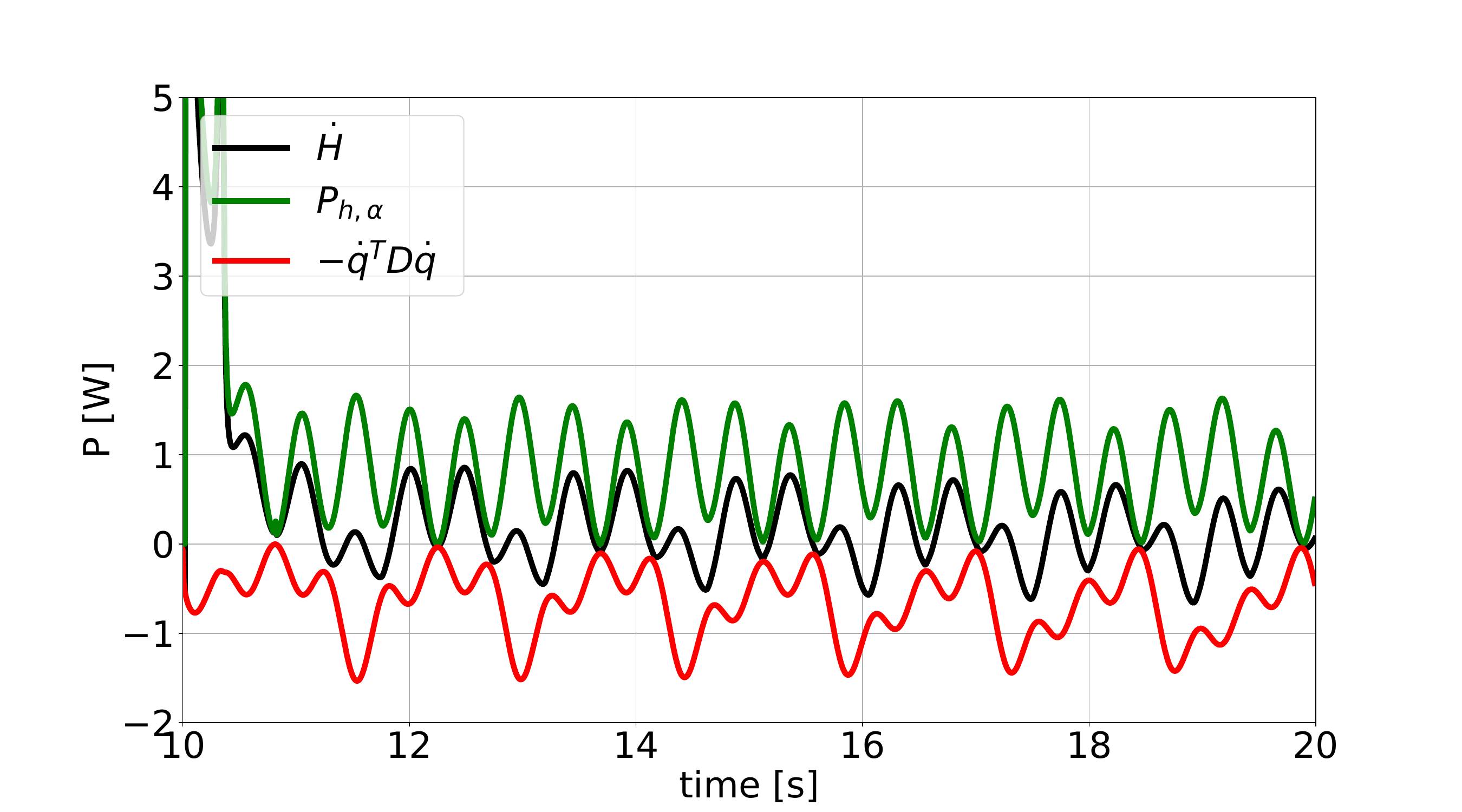}
            \label{fig:tiger}
        \end{subfigure}\vspace{-4mm}
        
        \caption{\markup{Double pendulum. CBF imposing total energy lower limit is activated after 10 \si{s}. From above, left column: i) CBF, ii) safety-critical control input, iii) total energy, and iv) power terms in (\ref{powerMech}), detailed on its right. Top right: cartesian snapshots of the system before (gray) and after (black) the CBF activation, and joint trajectory below.}} \label{fig:double_pendulum_pump}
        \vspace{-3mm}
\end{figure}

\section{Conclusions}
\label{sec:conc}
In this paper we gave constructive tools to study the qualitative and quantitative effect of safety-critical control schemes implemented with CBFs on the energy balance of controlled physical systems. The analysis led to novel energy-aware schemes, such as selective damping injection mechanisms and active control strategies that inject energy into the controlled system to achieve desired closed-loop behaviors. 

\renewcommand*{\bibfont}{\small}
\printbibliography

\end{document}